\documentclass[11 pt]{amsart}
\usepackage{amsmath, amsfonts}

\usepackage{defsnew}
\usepackage{macros_Chinmay}
\usepackage{notation_mw}
\usepackage{comment}
\usepackage{tikz}
\usetikzlibrary{positioning}
\usepackage{amsthm}
\usepackage{setspace}
\usepackage{cases}
\usepackage{hyperref}
\usepackage{defs_Routing}

\title{Inducing Social Optimality in Games via Adaptive Incentive Design}

\author[Maheshwari, Kulkarni, Wu, Sastry]{Chinmay Maheshwari\({}^{1}\), Kshitij Kulkarni\({}^{1}\), Manxi Wu\({}^{1,2}\), and S. Shankar Sastry\({}^{1}\)
}
\thanks{\(^{1}\) The authors are with the Department of Electrical Engineering and Computer Sciences, University of California Berkeley, USA. \href{mailto:chinmay\_maheshwari@berkeley.edu}{chinmay\_maheshwari@berkeley.edu}, \href{mailto:kshitijkulkarni@berkeley.edu}{kshitijkulkarni@berkeley.edu} \href{mailto:manxiwu@berkeley.edu}{manxiwu@berkeley.edu}, \href{mailto:sastry@eecs.berkeley.edu}{sastry@eecs.berkeley.edu}.}
\thanks{\({}^2\) The author is with Simons Institute for the Theory of Computing, Berkeley
}

\begin{document}

\maketitle

\begin{abstract}
How can a social planner adaptively incentivize selfish agents who are learning in a strategic environment to induce a socially optimal outcome in the long run? We propose a two-timescale learning dynamics to answer this question in both atomic and non-atomic games. In our learning dynamics, players adopt a class of learning rules to update their strategies at a faster timescale, while a social planner updates the incentive mechanism at a slower timescale. In particular, the update of the incentive mechanism is based on each player's externality, which is evaluated as the difference between the player's marginal cost and the society's marginal cost in each time step. We show that any fixed point of our learning dynamics corresponds to the optimal incentive mechanism such that the corresponding Nash equilibrium also achieves social optimality. 
We also provide sufficient conditions for the learning dynamics to converge to a fixed point so that the adaptive incentive mechanism eventually induces a socially optimal outcome.  
Finally, we demonstrate that the sufficient conditions for convergence are satisfied in a variety of games, including (i) atomic networked quadratic aggregative games, (ii) atomic Cournot competition, and (iii) non-atomic network routing games.  
\end{abstract}

\section{Introduction}
The design of incentive mechanisms plays a crucial role in many social-scale systems, where the system outcomes depend on the selfish behavior of a large number of interacting players (human users, service providers, and operators). The outcome arising from such strategic interaction -- Nash equilibrium -- often leads to suboptimal societal outcome. This is due to the fact that individual players often ignore the externality of their actions (i.e. how their actions affect the cost of others) when minimizing their own cost. An important way to address the issue of externality is to provide players with incentives that align their individual goal of cost minimization with the goal of minimizing the total cost of the society (\cite{paccagnan2019incentivizing,levin1985taxation,ho1982control,bacsar1984affine}).

The problem of incentive design is {further} complicated when the design faces a set of learning agents who are repeatedly updating their strategies (\cite{barrera2014dynamic,como2021distributed,maheshwari2021dynamic}). Such a problem is particularly relevant when the physical system has experienced a random shock, and players are in the process of reaching a new equilibrium. Designing a socially optimal incentive mechanism directly based on the convergent strategy of the learning agents is challenging because such an equilibrium is typically difficult to compute in large-scale systems. The question that then arises is: how can a social planner design an adaptive incentive mechanism to influence players' learning dynamics so that the strategy learning under the adaptive mechanism leads to a socially beneficial outcome in the long run?

We propose a discrete-time learning dynamics that jointly captures the players' strategy updates and the designer's updates of incentive mechanisms. Our learning dynamics can be used for both atomic games and non-atomic games. The incentive mechanism designed by the social planner sets a payment (tax or subsidy) for each player that is added to their cost function in the game. In each time step, players update their strategies based on the opponents’ strategies and the incentive mechanism in the current step, and the social planner updates the incentive mechanism in response to players’ current strategies. We assume that the incentive update proceeds at a slower timescale than the strategy update of players. The slower evolution of incentives is in-fact a desirable characteristic for any societal scale system, where frequent changes of incentives may lead to instability in the system and may hamper participation by players. The slow evolution of incentives allows players to consider the incentives as static while updating their strategies. 

A key feature of our learning dynamics is that the incentive update in each time step is based on the externality created by each player with their current strategy. In particular, given any strategy profile, the externality of each player is evaluated as the difference between the marginal cost of their strategy on themselves and the marginal social cost. In a static incentive design problem, when all players are charged with their externality, the change of their total cost – original cost in game plus the payment – with respect to their strategy becomes identical to the change of social cost. Consequently, the induced Nash equilibrium is also socially optimal \cite{varian1994solution,davis1962externalities,pigou2017economics}. In our learning dynamics, the social planner accounts for the externality of each player evaluated at their current strategy, which evolves with players’ strategy updates. 

The externality-based incentive updates distinguish our adaptive incentive design from other recent studies on incentive mechanisms with learning agents.  The paper \cite{ratliff2020adaptive} studies the problem of incentive design while learning the cost functions of players. The authors assume that both the cost functions and incentive policies are linearly parameterized, and the incentive updates rely on the knowledge of players’ strategy update rules instead of just the current strategy as in our setting. Additionally, the paper \cite{liu2021inducing} considers a two-timescale discrete-time learning dynamics, where players adopt a mirror descent-based strategy update, and the social planner updates an incentive parameter according to a gradient descent method. The convergence of such gradient-based learning dynamics relies on the assumption that the social cost given players’ equilibrium strategy is convex in the incentive parameter. However, the convexity assumption can be restrictive since the equilibrium strategy as a function of the incentive parameter is nonconvex even in simple games.

We show that our externality-based incentive updates ensure that any fixed point of our learning dynamics corresponds to a optimal incentive mechanism, such that the induced Nash equilibrium of the game is also socially optimal (Proposition \ref{prop: Alignment}). This result is built on the fact that at any fixed point of our learning dynamics, the strategy profile is a Nash equilibrium corresponding to the incentive mechanism, and each player's payment equals to the externality created by their equilibrium strategy. Therefore, the equilibrium strategy associated with this externality-based payment also minimizes the social cost. Additionally, we present the sufficient conditions on the game such that the fixed point set is a singleton set, and thus the socially optimal incentive mechanism is unique (Proposition \ref{prop: UniquenessOfPInitFin}).

Furthermore, we provide sufficient conditions on games that guarantee the convergence of strategies and incentives induced by our learning dynamics (Theorem \ref{thm: ConvergenceFin}). Since the convergent strategy profile and incentive mechanism corresponds to a fixed point that is also socially optimal, these sufficient conditions guarantee that the adaptive incentive mechanism eventually induces a socially optimal outcome in the long run. 

In the proof of our convergence theorem, we exploit the timescale separation between the strategy update and the incentive updates. We use tools from the theory of two-timescale dynamical systems \cite{borkar1997stochastic} to analyze the convergence of strategy updates and incentive updates separately after accounting of time separation. In particular, the convergence of strategy updates can be derived from the rich literature of learning in games (\cite{fudenberg1998theory},\cite{sandholm2010population},\cite{monderer1996fictitious}, etc.) since the incentive mechanism can be viewed as static in the strategy updates thanks to the time separation. On the other hand, the convergence of incentive vectors can be analyzed via the associated continuous-time dynamical system, in which the value of the externality function is evaluated at the converged value of fast strategy update, which is the Nash equilibrium.
Our sufficient conditions are based broadly on two main techniques of proving global stability of non-linear dynamical system: \textit{(i)} cooperative dynamical systems theory \cite{hirsch1985systems} and \textit{(ii)} Lyapunov based methods \cite{sastry2013nonlinear}.

Finally, we apply our general results to three classes of games: (i) atomic networked quadratic games; (ii) atomic cournot competition; (iii) nonatomic routing games. In each class of games, we present the adaptive incentive design based on the externality of players’ strategies. We also provide sufficient conditions on the game parameters and social cost functions under which the adaptive incentive design eventually induces a socially optimal outcome.

The article is organized as follows: in Sec. \ref{sec: Model} we describe the setup of both atomic and non-atomic game considered here. In addition, we also provide the joint strategy and incentive update considered in this paper. We present the main results in Sec. \ref{sec: Results} and the applications of those results in three class of games in Sec. \ref{sec: Applications}. We conclude our work in Sec. \ref{sec: Conclusion}. 

\subsection*{Notations}
For any vector \(x\in \R^{n}\), we use $x_j$ or \(x^{j}\) to denote the \(j-\)th component of that vector. 
Given a function \(f:\R^{n}\rightarrow \R\), we use \(D_{x_i}f(x)\) to denote $\frac{\partial f}{\partial x_i}(x)$, the derivative of $f$ with respect to $x_i$ for any \(i\in\{1,2,...,n\}\). For any matrix \(A\in \R^{n\times n}\) we denote the set of eigenvalues of $A$ by \(\spec(A)\). For any set \(A\) we use \(\textsf{conv}(A)\) to denote the convex hull of the set. We use \(k\) to denote the discrete-time index and \(t\) to denote the continuous-time index.
\section{Model}\label{sec: Model}
We introduce both atomic and non-atomic static games in Sec.\ref{subsec:static}. In Sec. \ref{ssec: DynamicsModel}, we present the two-timescale dynamics of strategy learning and incentive design. 
\subsection{Static games}\label{subsec:static}
  
\subsubsection{Atomic Games}
Consider a game $\finGame$ with a finite set of players $\playerSet$. The strategy of each player $i \in \playerSet$ is $\strategyFin_i \in \strategySetFin_i$, where $\strategySetFin_i$ is a non-empty and closed interval in $\mathbb{R}$.
The strategy profile of all players is $\strategyFin=(\strategyFin_i)_{i \in \playerSet}$, and the set of all strategy profiles is $\strategySetFin \defas \prod_{i \in \playerSet} \strategySetFin_i$. 
The cost function of each player $i \in \playerSet$ is $\ell_i: \strategySetFin \to \mathbb{R}$.\footnote{We measure the outcome of our games by costs instead of utilities. Equivalently, the utility of each player is the negative value of the cost.} For any $\strategyFin_{-i} = (\strategyFin_j)_{j \in \playerSet \setminus \{i\}}$, we assume that the cost function \(\ell_i(\strategyFin_i, \strategyFin_{-i})\) is  twice-continuously differentiable and strictly convex in $\strategyFin_i$ for all $i \in \playerSet$.

A \emph{social planner} designs incentives by setting a payment $p_i x_i$ for each player $i$ that is linear in their strategy $x_i$.\footnote{Considering a linear payment is sufficient to ensure optimal incentive design in atomic games.} Here, $\incentiveFin_i$ represents the marginal payment for every unit increase in strategy of player $i$. The value of $\incentiveFin_i$ can either be negative or positive, which represents a marginal subsidy or a marginal tax, respectively.


Given the incentive vector $p=(p_i)_{i \in I}$, the total cost of each player \(i\in\playerSet\) is:
\begin{align}\label{eq: TotCost}
    \costFin_i(\strategyFin,\incentiveFin) = \lossFin_i(\strategyFin)+ \incentiveFin_i\strategyFin_i,\quad \forall \ \strategyFin\in\strategySetFin.
\end{align}



A strategy profile $\xEqFin{}(p) \in \strategySetFin$ is a
\emph{Nash equilibrium} in the atomic game $\finGame$ with the incentive vector $\incentiveFin$ if  \begin{align*}
    \costFin_i(\xEqFin{i}(p),\xEqFin{-i}(p),\incentiveFin) \leq \costFin_i(\strategyFin_i,\xEqFin{-i}(p),\incentiveFin), ~\forall \ \strategyFin_i \in \strategySetFin_i, ~\forall i \in \playerSet. 
\end{align*}
Recall that the cost $\ell_i(\strategyFin_i, \strategyFin_{-i})$ is a continuous function, and is strictly convex in $\strategyFin_i$. Additionally, the strategy set $\strategySetFin_i$ is convex for each player $i$. Therefore, we know that Nash equilibrium must exist and must be unique in $\finGame$. Moreover, we can equivalently represent a Nash equilibrium $\xEqFin{}$ as a strategy profile that satisfies the following variational inequality (\cite{facchinei2007finite}):
\begin{align}\label{eq: FiniteGameVI}
    \langle \JacobianIncentiveFin(\xEqFin{}(p),\incentiveFin) , \strategyFin-\xEqFin{}(p) \rangle \geq 0, \quad \forall \ \strategyFin\in \strategySetFin, 
\end{align}
where $\JacobianIncentiveFin(\xEqFin{}(p),\incentiveFin)= (\JacobianIncentiveFin_i(\xEqFin{}(p),\incentiveFin))_{i \in \playerSet}$, and 
 \begin{align}\label{eq: GameJacobian}
      \JacobianIncentiveFin_i(\xEqFin{}(p),\incentiveFin) = \Der_{\strategyFin_i} \costFin_i(\xEqFin{}(p),\incentiveFin) = \Der_{\strategyFin_i} \ell_i(\xEqFin{})+ \incentiveFin_i.  
    \end{align}

Furthermore, a strategy profile $\socOptFin \in \strategySetFin$ is \emph{socially optimal} if $\socOptFin$ minimizes the social cost function \(\socCostFin:\strategySetFin\to\R\). We assume that the social cost function \(\socCostFin(\strategyFin)\) is strictly convex and twice continuously differentiable in $\strategyFin$. Then, the optimal strategy profile $\socOptFin$ is unique. Additionally, from the first order conditions of optimality, we know that \(\socOptFin\) minimizes the social cost function $\socCostFin$ if and only if:
\begin{align}\label{eq: SocCostFiniteVI}
    \langle \nabla \socCostFin(\socOptFin) , \strategyFin-\socOptFin \rangle \geq 0, \quad \forall \ \strategyFin\in \strategySetFin.
\end{align}

Finally, given a strategy profile \(\strategyFin\in\strategySetFin\), we define the \emph{externality} caused by player $i$ as the difference between the marginal social cost, and the marginal cost of player $i$  with respect to $\strategyFin_i$. That is,
\begin{align}\label{eq: ExterFin}
    \externality_i(\strategyFin) = \Der_{\strategyFin_i} \socCostFin(\strategyFin) -  \Der_{\strategyFin_i} \ell_i(\strategyFin). 
\end{align}

\subsubsection{Non-atomic Games}\label{ssec: PopGame} 
Consider a game $\popGame$ with a finite set of player populations $\pop$. Each population $i \in \pop$ is comprised of a continuum set of players with mass $\massPop_i >0$. Individual players in each population can choose an action in a finite set $\stratPop_i$. The strategy of population $i \in \pop$ is $\strategyPop_i = \left(\strategyPop_i^j\right)_{j \in \stratPop_i}$, where $\strategyPop_i^j$ is the fraction of individuals in population $i$ who choose action $j \in \stratPop_i$. Then, the strategy set of population $i$ is $\strategySetPop_i =\left\{\strategyPop_i | \sum_{j \in \stratPop_i} \strategyPop_i^j = \massPop_i, ~ \strategyPop_i^j \geq 0, \forall j \in \stratPop_i\right\}$. The strategy profile of all populations is $\strategyPop=(\strategyPop_i)_{i \in \pop} \in \strategySetPop = \prod_{i \in \pop} \strategySetPop_i$. Given a strategy profile $\strategyPop\in \strategySetPop$, the cost of players in population $i \in \pop$ for choosing action $j \in \stratPop_i$ is $\lossPop_{i}^{j}(\strategyPop)$ which is assumed to be continuously differentiable. We denote $\lossPop_i(\strategyPop) = (\lossPop_i^j(\strategyPop))_{j \in \stratPop_i}$ as the vector of costs for each population $i \in \pop$.

Given any $\strategyPop \in \strategySetPop$, a social planner designs incentives by setting a payment $\incentivePop{i}{j}$ for players in population $i$ who choose action $j$. Consequently, given the incentive vector \(\incentivePop{}{} = \lr{\incentivePop{i}{j}}_{j\in S_i, i\in\pop}\), the total cost of players in each population $i \in \pop$ for choosing action $j \in \stratPop_i$ is given by: 
\begin{align}\label{eq: TotCostPop}
    \costPop{i}{j}(\strategyPop,\incentivePop{}{}) = \lossPop_i^{j}(\strategyPop)+\incentivePop{i}{j} \quad \forall \  \strategyPop\in \strategySetPop.
\end{align}

A strategy profile $\xEqPop{}{}(\ptilde) \in \strategySetPop$ is a Nash equilibrium in the nonatomic game $\popGame$ with $\incentivePop{}{}$ if 
\begin{align*}
    \forall i \in \pop, ~ \forall j \in \stratPop_i, ~ &\xEqPop{i}{j}(\ptilde) > 0, \quad  \Rightarrow \\
    &\costPop{i}{j}(\xEqPop{}{}(\ptilde),\incentivePop{}{})\leq \costPop{i}{j'}(\xEqPop{}{}(\ptilde),\incentivePop{}{}) , \quad \forall j'\in \stratPop_i. 
\end{align*}
Similar to that in atomic games, we can equivalently represent the Nash equilibrium $\xEqPop{}{}(\ptilde)$ in non-atomic game $\popGame$ as a strategy profile that satisfies the following variational inequality (\cite{sandholm2010population}): 
\begin{align}\label{eq: PopulationGameVI}
    \langle \costPop{}{}(\xEqPop{}{}(\ptilde),\incentivePop{}{}), \strategyPop-\xEqPop{}{}(\ptilde) \rangle \geq 0 \quad \forall \ \strategyPop\in \strategySetPop, 
\end{align}
where $\costPop{}{}(\xEqPop{}{}(\ptilde),\incentivePop{}{})= (\costPop{i}{}(\xEqPop{}{}(\ptilde),\incentivePop{}{}))_{i \in \pop}$. 

Note that Nash equilibrium always exists in a population game $\popGame$ \cite[Theorem 2.1.1]{sandholm2010population}. Under the assumption that the cost function $\costPop{}{}(\strategyPop, \incentivePop{}{})$ is strictly monotone in $\strategyPop$ (Assumption \ref{assm: MonotonicCostPop}), Nash equilibrium $\xEqPop{}{}$ is also unique \cite{sandholm2010population}.
\begin{assm}\label{assm: MonotonicCostPop}
For every incentive vector \(\incentivePop{}{}\), 
\begin{align*}
    \lara{ \costPop{}{}(x,\incentivePop{}{}) - \costPop{}{}(x',\incentivePop{}{}),x-x' } > 0, \quad \forall x\neq x'\in \strategySetPop.
\end{align*}
\end{assm}

Analogous to the atomic games, a strategy profile $\socOptPop \in \strategySetPop$ is socially optimal if $\socOptPop$ minimizes a social cost function $\socCostPop: \strategySetPop \to \mathbb{R}$. We assume that $\socCostPop(\strategyPop)$ is strictly convex, and twice continuously differentiable in $\strategyPop$. Therefore, $\socOptPop$ is unique, and satisfies the following variational inequality constraints: 
\begin{align}\label{eq: SocCostPopVI}
    \langle \nabla \socCostPop(\socOptPop) , \strategyPop-\socOptPop \rangle \geq 0, \quad \forall \ \strategyPop\in \strategySetPop.
\end{align}

Finally, give any $\strategyPop \in \strategySetPop$, we define the externality caused by players in population $i$ who play action $j \in \stratPop_i$ as the difference between the marginal social cost, and the cost experienced by the players in population $i$ who chooses action $j$, i.e. 
\begin{align}\label{eq: ExterPop}
    \externalityPop{i}{j}(\strategyPop) = \Der_{\strategyPop_i^{j}} \socCostPop(\strategyPop) - \lossPop_{i}^{j}(\strategyPop).
\end{align}

\subsection{Learning dynamics}\label{ssec: DynamicsModel}
We now introduce the discrete-time learning dynamics considered in this paper. For every time step \(k=1,2,...\), we denote the strategy profile in the atomic game $\G$ (resp. non-atomic game $\Gtilde$) as $\xk{k}= (\xki{k})_{i \in \playerSet}$ (resp. $\xtildek{k}= (\xtildeki{k})_{i \in \pop}$), where $\xki{k}$ (resp. $\xtildeki{k}$) is the strategy of player $i$ (population $i$) in step $k$. Additionally, we denote the incentive vector as $\pk{k}= (\pki{k})_{i \in \playerSet}$ (resp. $\ptildek{k} = (\ptildeki{k}^j)_{j \in S_i, i \in \pop}$). The strategy updates and the incentive updates are presented below: 

\medskip 
\noindent\textbf{Strategy update.} In each step $k+1$, the updated strategy is a linear combination of the previous strategy in stage $k$ (i.e. $\xk{k}$ in $\G$ and $\xtildek{k}$ in $\Gtilde$), and a new strategy (i.e. $\f(\xk{k},\pk{k}) \in \X$ in $\G$ and $\ftilde(\xtildek{k},\ptildek{k}) \in \Xtilde$ in $\Gtilde$) that depends on the previous strategy and the incentive vector in stage $k$. The relative weight in the linear combination is determined by the step-size $\stepx{k} \in (0, 1)$. 
\begin{align}
   \xk{k+1}&= (1-\stepx{k})\xk{k}+\stepx{k}\f(\xk{k},\pk{k})\tag{$x$-update}\label{eq: FinUpdateX}\\
    \xtildek{k+1} &= (1-\stepx{k})\xtildek{k}+\stepx{k}\tilde{f}(\xtildek{k},\ptildek{k})\tag{$\xtilde$-update}\label{eq: PopUpdateX}
\end{align}

We consider generic strategy updates \eqref{eq: FinUpdateX} and \eqref{eq: PopUpdateX} such that the new strategy profile $\f(\xk{k},\pk{k}) = (\f_i(\xk{k}, \pk{k}))_{i \in \playerSet}$ and  $\ftilde(\xtildek{k},\ptildek{k}) =(\ftilde_i(\xtildek{k},\ptildek{k}))_{i \in \pop}$ can incorporate a variety of strategy update rules. Two simple examples of such updates include: 
\begin{enumerate}
    \item \textit{Equilibrium update:}
        The strategy update incorporates a Nash equilibrium strategy profile with respect to the incentive vector in stage $k$. That is,
        \(\f(\strategyFin_k,\incentiveFin_k) = \xEqFin{}(\incentiveFin_k)\) and \(\tilde{f}(\strategyPop_k,\incentivePop{k}{})=\xEqPop{}{}(\incentivePop{k}{})\).
        \item \textit{Best response update:} The strategy update incorporates a best response strategy with respect to the strategy and the incentive vector in the previous step, i.e. \(\f_i(\xk{k},\pk{k}) = \mathrm{BR}_i(x_{k},\pk{k}) = \underset{y_i\in\strategySetFin_i}{\arg\min} \ \costFin_i(y_i,x_{-i,k},\pk{k}),\)
        \(\tilde{f}_i(\xtilde_{k},\ptildek{k}) = \tilde{\mathrm{BR}}_i(\tilde{x}_{k},\ptildek{k}) = \underset{\tilde{y}_i\in\strategySetPop_i}{\arg\min}  \  \tilde{y}_i^\top \costPop{i}{}(\tilde{x}_{k},\ptildek{k}).\)
\end{enumerate}

\medskip 
\noindent\textbf{Incentive update.} In each step $k+1$, the updated incentive vector is a linear combination of the previous vector in step $k$ (i.e. $\pk{k}$ in $\G$ and $\ptildek{k}$ in $\Gtilde$), and the externality (i.e. $\mdFin(\xk{k})$ in $\G$ and $\mdPop(\xtildek{k})$ in $\Gtilde$) based on the strategy profile in step $k$. The relative weight in the linear combination is determined by the step size $\stepp{k} \in (0, 1)$. 
\begin{align}
\pk{k+1} &= (1-\stepp{k})\pk{k}+\stepp{k} \mdFin(\xk{k}); \tag{$p$-update}\label{eq: FinUpdateP}\\
   \ptildek{k+1} &= (1-\stepp{k})\ptildek{k}+\stepp{k} \mdPop(\xtildek{k}); \tag{$\ptilde$-update}\label{eq: PopUpdateP}
\end{align}

The incentive updates \eqref{eq: FinUpdateP}-\eqref{eq: PopUpdateP} modify the incentives on the basis of the externality caused by the players. We emphasize that this update is adaptive to the evolution of players' strategies since the externality is evaluated based on players' current strategies. Moreover, the computation of each player's externality only requires that the social planner knows the gradients of its own costs and those of the players, evaluated at the players' current strategy profile.


The joint evolution of strategy profiles and incentive vectors $(\xk{k}, \pk{k})_{k=1}^{\infty}$ (resp. $(\xtildek{k}, \ptildek{k})_{k=1}^{\infty}$) in the atomic game $\G$ (resp. non-atomic game $\Gtilde$) is governed by the learning dynamics \eqref{eq: FinUpdateX} -- \eqref{eq: FinUpdateP} (resp. \eqref{eq: PopUpdateX} -- \eqref{eq: PopUpdateP}). The step-sizes $(\stepx{k})_{k=1}^{\infty}$ and $(\stepp{k})_{k=1}^{\infty}$ determine the speed of strategy updates and incentive updates. We make the following assumption on step-sizes: 

\begin{assm}\label{assm: StepSizeAssumption}\hfill 
\begin{itemize}
\item[(i)] $\sum_{k=1}^{\infty}\stepx{k}=\sum_{k=1}^{\infty}\stepp{k}=+\infty$, $ \sum_{k=1}^{\infty}\stepx{k}^2+\stepp{k}^2 < +\infty$. 
\item[(ii)]$\lim_{k\to\infty}\frac{\stepp{k}}{\stepx{k}}=0$.
\end{itemize}
\end{assm}

\medskip 
In Assumption \ref{assm: StepSizeAssumption}, \emph{(i)} is a standard assumption on step-sizes that allow us to analyze the convergence of the discrete-time learning dynamics. Additionally, \emph{(ii)} assumes that the incentive update occurs at a slower timescale compared to the update of strategies.

Since the assumption on stepsizes (Assumption \ref{assm: StepSizeAssumption} \emph{(ii)}) ensures that the incentive evolves on a slower timescale than the strategies, players may view the incentive mechanism as approximately static (although not completely fixed) when updating their strategies. One can show that with any fixed incentive mechanism, strategy updates with Nash equilibrium being the new strategy always converges. On the other hand, although best response updates, which we also consider, do not converge in all games, they converge in many practically-relevant games such as zero sum games \cite{harris1998rate}, potential games \cite{swenson2018best}, and dominance solvable games \cite{nachbar1990evolutionary}. Additionally, our strategy updates \eqref{eq: FinUpdateX} and \eqref{eq: PopUpdateX} can incorporate many other learning dynamics; their convergence properties in static game environments have been extensively studied in the literature, in both atomic and nonatomic games \cite{mazumdar2020gradient}, \cite{sandholm2010population}, \cite{fudenberg1998theory}, \cite{monderer1996fictitious}.

We emphasize that the convergence of strategy updates with fixed incentive mechanism is not the focus of our paper. Instead, our goal is to characterize conditions under which the adaptive incentive updates \eqref{eq: FinUpdateP} and \eqref{eq: PopUpdateP} converge to a socially optimal mechanism. We note that such convergence cannot be achieved in scenarios where the strategy updates do not converge even with completely fixed incentive vector. Therefore, we impose the following assumption that the strategy updates consider in our dynamics converge to a Nash equilibrium with any fixed incentive vector.


\begin{assm}\label{assm: ConvergenceStrategy}
 In \(\finGame\) (resp. \(\popGame\)), the
 updates \eqref{eq: FinUpdateX} (resp. \eqref{eq: PopUpdateX}) starting from any
 initial strategy $\strategyFin_1$ (resp. \(\strategyPop_1\)) with \((\incentiveFin_k) \equiv \incentiveFin\) for any \(\incentiveFin\) (resp. \((\incentivePop{k}{}) \equiv \incentivePop{}{}\) for any \(\incentivePop{}{}\)),  satisfies
 \(\lim_{k\to\infty}\strategyFin_k=\xEqFin{}(\incentiveFin)\) (resp. \(\lim_{k\to\infty}\strategyPop_k=\xEqPop{}{}(\incentivePop{}{})\)), where $\xEqFin{}(\incentiveFin)$ (resp. $\xEqPop{}{}(\incentivePop{}{})$) is the Nash equilibrium corresponding to \(\incentiveFin\) (resp. \(\incentivePop{}{}\)). 
\end{assm}


\section{General results}\label{sec: Results}

In Sec \ref{subsec:fpa} we characterize the set of fixed points of the dynamic updates \eqref{eq: FinUpdateX}-\eqref{eq: FinUpdateP} and \eqref{eq: PopUpdateX}-\eqref{eq: PopUpdateP}, and show that any fixed point corresponds to a socially optimal incentive mechanism such that the induced Nash equilibrium strategy profile minimizes the social cost. In Sec. \ref{subsec: Convergence}, we provide a set of sufficient conditions that guarantee the convergence of strategies and incentives in our learning dynamics. Under these conditions, our learning dynamics designs an adaptive incentive mechanism that eventually induces a socially optimal outcome.

\subsection{Fixed point analysis}\label{subsec:fpa}
We first characterize the set of fixed points of our learning dynamics \eqref{eq: FinUpdateX}-\eqref{eq: FinUpdateP}, and \eqref{eq: PopUpdateX}-\eqref{eq: PopUpdateP}) as follows: 
{\small \begin{subequations}
\begin{align}
   &\text{Atomic game $\G$,} ~ \left\{(x, p) | f(x, p) = x, ~ e(x) = p\right\}, \label{subeq:fixedfin}\\
   &\text{Nonatomic game $\Gtilde$,} ~ \left\{(\tilde{x}, \tilde{p}) | \tilde{f}(\tilde{x}, \tilde{p}) = \tilde{x}, ~ \tilde{e}(\tilde{x}) = \tilde{p}\right\}.\label{subeq:fixedpop}
\end{align}
\end{subequations}}

We can check that if the learning dynamics starts with a fixed point strategy and incentive vector, then the strategies and incentive vectors remain at that fixed point for all time steps. Moreover, under Assumption \ref{assm: ConvergenceStrategy}, we know  that for any incentive vector $p$ (resp. $\tilde{p}$), a strategy profile that satisfies $f(x, p)=x$ (resp. $\tilde{f}(\tilde{x}, \tilde{p}) = \tilde{x}$) in game $\G$ (resp. $\Gtilde$) must be a Nash equilibrium $\xEqFin{}(p)$ (resp. $\xEqPop{}{}(p)$). Thus, from \eqref{subeq:fixedfin} -- \eqref{subeq:fixedpop}, we can write the set of incentive vectors at the fixed point as follows: 
\begin{subequations}
\begin{align*}
   \text{Atomic game $\G$,} ~ \Peq &= \{(\peq_i)_{i\in\playerSet} | \exterFin(\xEqFin{}(\peq)) = \peq \}, \\
    \text{Nonatomic game $\Gtilde$,} ~ \Ptildeeq &= \{(\ptildeeq_i)_{i\in\playerSet} | \exterPop (\tilde{x}^{*}(\ptildeeq)) = \ptildeeq \}.
\end{align*}
\end{subequations}
That is, at any fixed point, the incentive of each player is set to be equal to the externality evaluated at their equilibrium strategy profile. 

Our next proposition shows that the fixed point set $\Peq$ (resp. $\Ptildeeq$) is non-empty in $\G$ (resp. $\Gtilde$). Moreover, given any fixed point incentive parameter $\peq \in \Peq$ and $\ptildeeq \in \Ptildeeq$, the corresponding Nash equilibrium is socially optimal. 


\begin{prop}\label{prop: Alignment}
In $\G$ (resp. $\Gtilde$), the set $\Peq$ (resp. $\Ptildeeq$) is non-empty. Additionally, any $\peq \in \Peq$ (resp. $\ptildeeq \in \Ptildeeq$) is socially optimal in that $\xEqFin{}(\peq) = \socOptFin$ (resp. $\xEqPop{}{}(\ptildeeq)= \socOptPop$). 
\end{prop}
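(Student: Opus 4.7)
The plan is to prove both parts of the proposition through a single algebraic observation: when each player's incentive equals their externality, the Nash-equilibrium variational inequality for the incentivized game coincides term-by-term with the variational inequality characterizing the social optimum. I will describe the atomic case in detail; the non-atomic case is structurally identical.

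First, I would show that every fixed point is socially optimal. Take any $\peq \in \Peq$. By definition of $\Peq$, $\peq_i = \externality_i(\xEqFin{}(\peq)) = \Der_{x_i}\socCostFin(\xEqFin{}(\peq)) - \Der_{x_i}\ell_i(\xEqFin{}(\peq))$ for every $i \in \playerSet$. Substituting into \eqref{eq: GameJacobian} gives
\[
\JacobianIncentiveFin_i(\xEqFin{}(\peq),\peq) \,=\, \Der_{x_i}\ell_i(\xEqFin{}(\peq)) + \peq_i \,=\, \Der_{x_i}\socCostFin(\xEqFin{}(\peq)),
\]
so the Nash VI \eqref{eq: FiniteGameVI} at $\xEqFin{}(\peq)$ becomes literally the social-optimality VI \eqref{eq: SocCostFiniteVI}. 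Uniqueness of the minimizer of $\socCostFin$ then forces $\xEqFin{}(\peq) = \socOptFin$.

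For non-emptiness, rather than invoking a fixed-point theorem for the map $p \mapsto \externality(\xEqFin{}(p))$ (which would require continuity of $\xEqFin{}(\cdot)$ together with a compactness or contraction argument), I would construct a fixed point explicitly. Set $\peq \defas (\externality_i(\socOptFin))_{i\in\playerSet}$. Running the cancellation above in reverse yields $\JacobianIncentiveFin_i(\socOptFin,\peq) = \Der_{x_i}\socCostFin(\socOptFin)$, so $\socOptFin$ satisfies the Nash VI \eqref{eq: FiniteGameVI} at incentive $\peq$ (using \eqref{eq: SocCostFiniteVI}). By uniqueness of the Nash equilibrium (strict convexity of each $\ell_i$ in $x_i$, as asserted in the paper), $\xEqFin{}(\peq) = \socOptFin$, hence $\externality(\xEqFin{}(\peq)) = \externality(\socOptFin) = \peq$ and $\peq \in \Peq$. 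The non-atomic case proceeds identically with \eqref{eq: PopulationGameVI} and \eqref{eq: SocCostPopVI} in place of \eqref{eq: FiniteGameVI} and \eqref{eq: SocCostFiniteVI}: setting $\ptildeeq_i^j = \externalityPop{i}{j}(\socOptPop)$ makes $\costPop{i}{j}(\socOptPop,\ptildeeq)$ coincide with $\Der_{\strategyPop_i^j}\socCostPop(\socOptPop)$ for every $(i,j)$, after which Assumption \ref{assm: MonotonicCostPop} (for uniqueness of Nash) and strict convexity of $\socCostPop$ (for uniqueness of the social optimum) close the argument.

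There is essentially no obstacle: the whole proof is a one-line substitution exploiting the fact that the externality is exactly the gap between marginal social cost and marginal individual cost. The only real choice is whether to establish non-emptiness via a fixed-point theorem on $p \mapsto \externality(\xEqFin{}(p))$ or via the explicit construction from $\socOptFin$; the latter is much cleaner and sidesteps all continuity/compactness verifications.
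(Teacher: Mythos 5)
Your proof is correct, and the second half takes a genuinely different route from the paper. The social-optimality part (any $\peq \in \Peq$ induces $\xEqFin{}(\peq) = \socOptFin$) is exactly the paper's argument: substitute $\peq_i = \exterFin_i(\xEqFin{}(\peq))$ into the game Jacobian so that the Nash variational inequality \eqref{eq: FiniteGameVI} becomes the social-optimality inequality \eqref{eq: SocCostFiniteVI}. For non-emptiness, however, the paper applies Brouwer's fixed point theorem to the map $p \mapsto \exterFin(\xEqFin{}(p))$, restricted to the convex hull of its (compact) image; this requires continuity of $\xEqFin{}(\cdot)$ and, crucially, compactness of the strategy set $\strategySetFin$. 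Your explicit construction $\peq \defas \exterFin(\socOptFin)$ runs the same cancellation in reverse and closes the loop using only uniqueness of the Nash equilibrium at that single incentive vector, so it needs neither Brouwer nor compactness. This is a real gain: the paper itself notes in the proofs of Propositions \ref{prop: QuadIncentive} and \ref{prop: CournotConvergence} that Proposition \ref{prop: Alignment} cannot be invoked there because the strategy spaces are unbounded, and it re-establishes existence by hand in each application; your argument would cover those cases directly (given that $\socOptFin$ exists and the equilibrium map is single-valued at $\peq$). The only thing to make explicit is that the variational inequality \eqref{eq: FiniteGameVI} is sufficient, not merely necessary, for Nash equilibrium — which holds here because each $\costFin_i(\strategyFin,\incentiveFin)$ is convex in $\strategyFin_i$ and $\strategySetFin$ is a product of convex sets (and analogously via \eqref{eq: PopulationGameVI} in the non-atomic case) — since you use it in the direction ``$\socOptFin$ satisfies the VI, hence is the Nash equilibrium.''
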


This result is especially interesting from perspective of implementation because the existence of the optimal incentives implies that for \(\finGame\) there exists a \emph{linear} incentive policy (as in \eqref{eq: TotCost}) which is optimal. Moreover for \(\popGame\) there exists a \emph{constant} incentive policy (as in \eqref{eq: TotCostPop}) that is optimal.

Proof of Proposition \ref{prop: Alignment} is based on Brouwer's fixed point theorem. The boundedness of the strategy space allows us to construct convex compact sets which maps to itself under \(\exterFin(\xEqFin{}(\cdot))\) (resp. \(\exterPop(\xEqPop{}{}(\cdot))\)) in \(\finGame\) (resp. in \(\popGame\)). 

Next, we provide sufficient conditions under which the fixed point set \(\Peq\) and \(\Ptildeeq\) are singleton. 
\begin{prop}\label{prop: UniquenessOfPInitFin}
In atomic game $\G$, the set $\Peq$ is singleton if any one of the following conditions holds: 
\begin{itemize}
    \item[(i)] The equilibrium strategy profile \(\xEqFin{}(\incentiveFin)\) is in the interior of the strategy set $\strategySetFin$ for any $p$ \item[(ii)] $\lara{\mdFin(x)-\mdFin(x'),x-x'} > 0$ for all $x \neq x'$
\end{itemize}
In non-atomic game $\Gtilde$, $\Ptildeeq$ is singleton if the externality function $\exterPop(\cdot)$ satisfies  Assumption \ref{assm: MonotonicCostPop} and condition (ii). 

\end{prop}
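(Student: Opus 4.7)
By Proposition~\ref{prop: Alignment} both $\Peq$ and $\Ptildeeq$ are non-empty, so the task reduces to showing that each listed condition forces at most one fixed-point incentive.

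\emph{Atomic case, condition (i).} When $\xEqFin{}(p)$ lies in the interior of $\strategySetFin$, the Nash variational inequality \eqref{eq: FiniteGameVI} sharpens to the coordinate-wise equality $D_{x_i}\ell_i(\xEqFin{}(p)) + p_i = 0$ for every $i \in \playerSet$. Combined with the fixed-point identity $p_i = D_{x_i}\socCostFin(\xEqFin{}(p)) - D_{x_i}\ell_i(\xEqFin{}(p))$, this yields $\nabla\socCostFin(\xEqFin{}(p)) = 0$. Strict convexity of $\socCostFin$ admits only the minimizer $\socOptFin$ as a critical point, so $\xEqFin{}(p) = \socOptFin$, and then $p = -\bigl(D_{x_i}\ell_i(\socOptFin)\bigr)_{i \in \playerSet}$ is determined.

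\emph{Atomic case, condition (ii).} Let $p_1, p_2 \in \Peq$ with $x_k := \xEqFin{}(p_k)$. Testing the two Nash VIs cross-wise (each at the other's equilibrium) and summing produces
\[
\langle g(x_1) - g(x_2) + p_1 - p_2,\ x_1 - x_2 \rangle \leq 0, \qquad g_i(x) := D_{x_i}\ell_i(x).
\]
The fixed-point identity $p_k = e(x_k) = \nabla\socCostFin(x_k) - g(x_k)$ cancels the $g$-terms and collapses the inequality to $\langle \nabla\socCostFin(x_1) - \nabla\socCostFin(x_2),\ x_1 - x_2 \rangle \leq 0$. Strict convexity of $\socCostFin$ then forces $x_1 = x_2$, and applying $p_k = e(x_k)$ at the common equilibrium yields $p_1 = p_2$.

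\emph{Non-atomic case.} Assumption~\ref{assm: MonotonicCostPop} makes $\xEqPop{}{}(\tilde p)$ a well-defined function of $\tilde p$. The identical Minty-type manipulation applied to the population Nash VI \eqref{eq: PopulationGameVI}, combined with the substitution $\tilde p_k = \nabla\socCostPop(\tilde x_k) - \lossPop(\tilde x_k)$ coming from the fixed-point identity, reduces to $\langle \nabla\socCostPop(\tilde x_1) - \nabla\socCostPop(\tilde x_2),\ \tilde x_1 - \tilde x_2 \rangle \leq 0$. Strict convexity of $\socCostPop$ forces $\tilde x_1 = \tilde x_2$, and hence $\tilde p_1 = \tilde p_2$.

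The main---though routine---technical obstacle is the Minty bookkeeping in the (ii) arguments: summing the two VIs with the correct sign, and then using $e = \nabla\socCostFin - g$ (resp.\ $\tilde e = \nabla\socCostPop - \lossPop$) so that the game-gradient pieces cancel exactly, leaving the strictly-monotone social-cost gradient which then delivers the contradiction.
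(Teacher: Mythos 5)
Your proof is correct, and while your part (i) mirrors the paper's argument (the interior VI gives $D_{x_i}\ell_i(x^{*}(p))+p_i=0$, the fixed-point identity then forces $\nabla \Phi(x^{*}(p))=0$, and strict convexity of $\Phi$ pins down the equilibrium and hence $p$), your part (ii) takes a genuinely different route. The paper first establishes, as a standalone observation, that the equilibrium map is monotone in the incentive, i.e. $\langle x^{*}(p_1)-x^{*}(p_2),\,p_1-p_2\rangle\le 0$; to get this it bounds the cross-tested VI sum by $-\langle D\ell(x^{*}(p_1))-D\ell(x^{*}(p_2)),\,x^{*}(p_1)-x^{*}(p_2)\rangle$ with $D\ell(x):=(D_{x_i}\ell_i(x))_i$ and discards that term as nonpositive ``by convexity of $\ell$'' --- a step that actually requires monotonicity of the game pseudo-gradient, not merely convexity of each $\ell_i$ in its own variable --- and then derives a contradiction from the strict monotonicity of $e$ assumed in condition (ii). You instead substitute the fixed-point identity $p_k=e(x_k)=\nabla\Phi(x_k)-D\ell(x_k)$ into the summed VIs \emph{before} estimating anything, so the pseudo-gradient terms cancel identically and only $\langle \nabla\Phi(x_1)-\nabla\Phi(x_2),\,x_1-x_2\rangle\le 0$ survives; the standing strict-convexity assumption on $\Phi$ then finishes the argument. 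This buys you two things: you never need the questionable monotonicity of $D\ell$, and you never invoke condition (ii) at all, so your computation in fact shows that $\Peq$ (and, by the same cancellation applied to \eqref{eq: PopulationGameVI}, $\Ptildeeq$) is a singleton under the standing assumptions alone --- a strictly stronger conclusion than the proposition claims.
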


Under the sufficient condition in Proposition \ref{prop: UniquenessOfPInitFin}, in \(\finGame\) (resp. \(\popGame\)) there exists a unique incentive mechanism in \(\Peq\) (resp. \(\Ptildeeq\)) such that players pay for their externality at equilibrium. From Proposition \ref{prop: Alignment}, such a mechanism induces a socially optimal outcome.


\subsection{Convergence to optimal incentive mechanism}\label{subsec: Convergence}
The next result provides sufficient conditions for strategies and incentives updates \eqref{eq: FinUpdateX}-\eqref{eq: FinUpdateP} and \eqref{eq: PopUpdateX}-\eqref{eq: PopUpdateP} to converge to social optimality. 

\begin{theorem}\label{thm: ConvergenceFin}
Under Assumptions \ref{assm: StepSizeAssumption} and \ref{assm: ConvergenceStrategy}, the sequence of strategies and incentives induced by the discrete-time dynamics \eqref{eq: FinUpdateX}-\eqref{eq: FinUpdateP} in $\G$ satisfies 
\begin{align}\label{eq: ConvergenceDiscrete}
   \lim_{k\to\infty}(\xk{k},\pk{k})= (\socOptFin, \peq) 
\end{align}
if at least one of the following conditions holds: 
\begin{itemize}
\item[(C1)] If $e_i(\xEqFin{}(0)) \geq 0$, then $\lim_{p \to \infty} e_i(\xEqFin{}(p)) - p_i <0$ for all $i\in\playerSet$. If $e_i(\xEqFin{}(0)) \leq 0$, then $\lim_{p \to -\infty} e_i(\xEqFin{}(p)) - p_i >0$ for all $i\in\playerSet$.\footnote{$p \to \infty$ means $p_i \to \infty$ for all $i$.} Moreover, \(\frac{\partial \mdFin_i(\xEqFin{}(\incentiveFin)) }{\partial \incentiveFin_j} > 0\) for all \(\incentiveFin\in\R^{\actDimFin}\) and all $i \neq j$. 
    

\item[(C2)] There exists a continuously differentiable, positive definite and decrescent function \footnote{A function $V: \mathbb{R}^n \rightarrow \mathbb{R}$ is positive definite if $V(x) \geq \alpha_1(\|x\|)$ for some  continuous, strictly increasing function $\alpha_1(\cdot)$ such that $\alpha_1(0) =0$, and $\alpha_1(t) \rightarrow \infty$ as $t \rightarrow \infty$. $V$ is decrescent if $V(x) \leq \alpha_2(\|x\|)$ for some  continuous, strictly increasing function $\alpha_2(\cdot)$ such that $\alpha_2(0) =0$.} $V(p):\mathbb{R}^{\actDimFin} \rightarrow \mathbb{R}_{+}$ such that \(V(\pEqFin{}) = 0\) and \(V(p)>0\) for all \(p\neq \pEqFin{}\). Moreover:
\[
\nabla V(p)^\top \lr{e(\xEqFin{}(p))-p} < -\omega(\|p-\pEqFin{}\|) \quad \forall \ p\neq \pEqFin{}, 
\]
where \(\omega(\cdot)\) is strictly increasing, and satisfies \(\omega(0)=0\). 
\end{itemize}
Analogously, the sequence of strategies and incentives in $\Gtilde$ induced by \eqref{eq: PopUpdateX} and \eqref{eq: PopUpdateP} satisfies $\lim_{k\to\infty}(\xtildek{k},\ptildek{k})= (\socOptPop, \ptildeeq)$ if the externality function $\tilde{e}$ satisfies at least one of (C1) and (C2). 
\end{theorem}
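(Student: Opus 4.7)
The overall strategy is to exploit the timescale separation from Assumption~\ref{assm: StepSizeAssumption}(ii) and invoke the two-timescale stochastic approximation framework of Borkar. On the fast timescale, with incentive $\pk{k}$ treated as quasi-static, Assumption~\ref{assm: ConvergenceStrategy} guarantees that $\xk{k}$ tracks the Nash equilibrium $\xEqFin{}(\pk{k})$ in the sense that $\|\xk{k} - \xEqFin{}(\pk{k})\| \to 0$. On the slow timescale, after substituting this tracking in, the incentive iterates $\pk{k}$ are asymptotically governed by the continuous-time ODE
\begin{align*}
\dot{p}(t) \;=\; e(\xEqFin{}(p(t))) \;-\; p(t),
\end{align*}
whose equilibria are exactly the fixed-point incentives $\peq \in \Peq$. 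By Proposition~\ref{prop: Alignment}, any such equilibrium satisfies $\xEqFin{}(\peq) = \socOptFin$. Hence the theorem reduces to showing that this ODE is globally asymptotically stable at $\peq$ under either (C1) or (C2); global asymptotic stability transfers back to the discrete iterates via Borkar's theorem, giving $(\xk{k},\pk{k}) \to (\socOptFin,\peq)$.

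\textbf{Case (C2).} This is a direct Lyapunov argument. The stipulated $V$ is positive definite, decrescent, vanishes only at $\peq$, and satisfies $\nabla V(p)^\top (e(\xEqFin{}(p)) - p) < -\omega(\|p-\peq\|)$ for all $p \neq \peq$. These properties are exactly the hypotheses of the classical Lyapunov global asymptotic stability theorem \cite{sastry2013nonlinear}, so every trajectory of the ODE converges to $\peq$, and the conclusion follows.

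\textbf{Case (C1).} Here I would argue via the theory of cooperative dynamical systems \cite{hirsch1985systems}. The condition $\partial e_i(\xEqFin{}(p))/\partial p_j > 0$ for all $i\neq j$ means the Jacobian of $g(p) = e(\xEqFin{}(p)) - p$ has strictly positive off-diagonal entries, so the ODE is strongly cooperative and irreducible. The boundary conditions on $e_i(\xEqFin{}(0))$ and its behaviour as $p \to \pm\infty$ let me build a bounded forward-invariant hyperrectangle around $\peq$: in each coordinate, the sign of $e_i(\xEqFin{}(0))-0$ prescribes the direction in which the field pushes off the origin, while the reversed sign at infinity caps trajectories, producing componentwise sub- and super-solutions. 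Inside this rectangle $\peq$ is the only equilibrium (coordinatewise monotonicity of the flow combined with the sign reversals rules out any other zero of $g$), and Hirsch's convergence theorem for irreducible strongly cooperative systems implies every trajectory converges to an equilibrium. Uniqueness then upgrades this to global convergence to $\peq$.

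\textbf{Main obstacle and the non-atomic case.} The delicate step will be (C1): making the invariant-rectangle construction precise from the boundary sign conditions, and confirming that the conclusions of Hirsch's theorem combined with uniqueness of the equilibrium in the trap region do yield global, rather than merely generic, convergence. I also need to check that the hypotheses of Borkar's two-timescale result (Lipschitz regularity of $e \circ \xEqFin{}$, boundedness of iterates, and the absence of martingale noise in our deterministic setting) hold; these are routine consequences of the smoothness assumptions on $\ell_i$ and $\socCostFin$ together with convexity of the strategy set. The non-atomic case proceeds by the same two-timescale reduction applied to the ODE $\dot{\tilde p}(t) = \tilde e(\xEqPop{}{}(\tilde p(t))) - \tilde p(t)$; the only additional care needed is that $\xEqPop{}{}(\tilde p)$ lies on a product of simplices, so the fast-scale tracking invokes Assumption~\ref{assm: ConvergenceStrategy} for $\popGame$, and Assumption~\ref{assm: MonotonicCostPop} guarantees uniqueness and continuity of $\xEqPop{}{}(\cdot)$ required to set up the slow ODE. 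With these in place, (C1) and (C2) applied to the externality $\tilde e$ deliver the non-atomic conclusion analogously.
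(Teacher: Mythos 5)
Your proposal follows essentially the same route as the paper's proof: the two-timescale reduction via Borkar's theory to the slow ODE $\dot p(t) = e(x^{*}(p(t))) - p(t)$, with (C2) dispatched by the classical Lyapunov global-stability theorem and (C1) by cooperative/monotone dynamical systems theory. If anything, your sketch is more explicit than the paper's own (quite terse) argument about what remains to be verified under (C1) --- the invariant-rectangle construction and the passage from Hirsch's generic convergence to the global convergence the theorem asserts --- points the paper leaves implicit.
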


Owing to Assumption \ref{assm: StepSizeAssumption}, we utilize the timescale separation between the strategy update \eqref{eq: FinUpdateX} and the incentive update \eqref{eq: FinUpdateP} to prove Theorem \ref{thm: ConvergenceFin}. Indeed, the two-timescale stochastic approximation theory \cite{borkar1997stochastic} suggests that the strategy update \eqref{eq: FinUpdateX} is a fast transient while the incentive update \eqref{eq: FinUpdateP} is a slow component. Therefore while considering the fast strategy update one should expect that slow incentive updates are quasi-static. Consequently, Assumption \ref{assm: ConvergenceStrategy} in game \(\finGame\) along with Assumption \ref{assm: StepSizeAssumption} ensures that the tuple \((\strategyFin_k,\incentiveFin_k)\) converges to the set \(\{(\xEqFin{}(\incentiveFin),\incentiveFin):\incentiveFin\in\R^{|\playerSet|}\}\) \cite{borkar1997stochastic}. Thus for sufficiently large values of \(k\), the update \(\strategyFin_k\) closely tracks \( \xEqFin{}(\incentiveFin_{k})\). Therefore, we consider the following update to analyze the convergence of the slow incentive update \eqref{eq: FinUpdateP}:
\begin{align}\label{eq: SlowDisFin}
    \incentiveFinSep_{k+1} &= (1-\stepp{k})\incentiveFinSep_{k}+\stepp{k} \mdFin(\xEqFin{}(\incentiveFinSep_k)).
\end{align}
Since the step sizes \(\{\stepP_k\}\) are asymptotically going to zero and has infinite travel (Assumption \ref{assm: StepSizeAssumption}-(i)) we can approximate the updates in \eqref{eq: SlowDisFin} by the following continuous-time dynamical system:
\begin{align}
    \dot{\incentiveFinSep}(t) &=  \exterFin(\xEqFin{}(\incentiveFinSep(t))) - \incentiveFinSep(t), \label{subeq:p_fin1}
\end{align}
Convergence of discrete-time updates \eqref{eq: FinUpdateX}-\eqref{eq: FinUpdateP} then hold if the flow of \eqref{subeq:p_fin1} globally converges to \(\Peq\). 

Requirements (C1) in Theorem \ref{thm: ConvergenceFin} are sufficient conditions for convergence of the trajectories of \eqref{subeq:p_fin1} to the set \(\Peq\). This condition is based on cooperative dynamical systems theory \cite{hirsch1985systems}.  Intuitively, condition (C1) demands that in the equilibrium if the players inflict (resp. alleviate) some externality when no incentive are applied then there should exist high enough prices (resp. subsidies) which can compensate for the externality.  Moreover, it demands that higher prices (resp. subsidies) on other player increases the externality inflicted (resp. alleviated) by a player. 


Requirement (C2) in Theorem \ref{thm: ConvergenceFin} on the other hand ensures convergence by positing existence of a Lyapunov function \cite{sastry2013nonlinear} that is strictly positive everywhere except at \(\Peq\) and decreases along the flow of {\eqref{subeq:p_fin1}}. 


Note that either one of the conditions (C1) or (C2) guarantees  the convergence of the flow of the slow system \eqref{subeq:p_fin1} to \(\Peq\). This in addition to the convergence of the fast strategy update  (Assumption \ref{assm: ConvergenceStrategy}) leads to the  convergence of the discrete-time dynamics \eqref{eq: FinUpdateX}-\eqref{eq: FinUpdateP} \cite[Chapter 6]{borkar1997stochastic}.

Thus, we have shown that there exists an incentive which induces an equilibrium which is socially optimal and the externality based pricing update along with any strategy update, satisfying requirements of Theorem \ref{thm: ConvergenceFin}, converges to the optimal incentive. 

\section{Applications}\label{sec: Applications}
In this section, we apply our general results to three classes of games that are practically relevant: (Sec \ref{ssec: Quad}) Atomic networked aggregative games; (Sec \ref{ssec: Cournot}) Atomic Cournot games; and (Sec \ref{ssec: Routing}) Non-atomic routing games. In each case, we show that our adaptive incentive mechanism asymptotically induces a socially optimal outcome. 
\subsection{Atomic Networked Aggregative Games}\label{ssec: Quad}
We consider a finite set of players \(\playerSet\) who are connected in a network. The strategy of each player \(i\in\playerSet\) is a real number \(\strategyFin_i\in \R\). We represent the network that connects players by a matrix \(\Zmat=(\Zmat_{ij})_{i,j\in\playerSet}\), where $\Zmat_{ij}$ captures the impact of player $j$'s strategy $x_j$ on player $i$'s cost. We assume that \(\Zmat_{ii}=0\) for all $i \in \playerSet$. The cost of each player \(i\in\playerSet\) given any strategy profile \(\strategyFin=(\strategyFin_i)_{i\in\playerSet}\) is a quadratic function as follows: 
\[
\lossFin_i(\strategyFin)=\frac{1}{2}{}\strategyFin_i^2-\kI_i\strategyFin_i\agg_i(\strategyFin)
\]
where \(\kI_i>0\) and \(\agg_i(\strategyFin)=\sum_{j \in \playerSet}\Zmat_{ij}\strategyFin_j\) is the average strategy of player $i$'s neighbors weighted by the network matrix $\Zmat$. That is, $\agg_i(\strategyFin)$ captures the network effect of opponents' strategies on  player $i$.   



Networked aggregative games are applicable in a variety of settings, where players' strategies and costs are affected by those around them. Examples of such settings include peer effects, investment in networked markets, and cross-neighborhood impacts of crime \cite{jackson2015games}.

A social planner designs an incentive mechanism that charges each player $i$ with payment $p_i x_i$. The total cost of player \(i\) 
under strategy profile \(\strategyFin\) and  \(\incentiveFin\) is   
\[
\costFin_i(\strategyFin,\incentiveFin) = \frac{1}{2}{}\strategyFin_i^2 - \strategyFin_i\kI_i\agg_i(\strategyFin) + \incentiveFin_i\strategyFin_i.
\]

Let \(\Kmat=\textsf{diag}([\kI_1,\kI_2,...,\kI_{\numPlayers}]^\top)\). We assume that the matrix \((I-\Kmat\Zmat)\) is invertible and let \(L=(I-\Kmat\Zmat)^{-1}\). In the economics literature, the matrix \((I-\Kmat\Zmat)^{-1}\) is referred to as the \emph{Leontief matrix}, where the \(ij\) entry of this matrix captures how the payment of player $j$ affects the equilibrium strategy of player $i$ (\cite{parise2017sensitivity}). 

For any $p$, we show that the aggregative game has a unique Nash equilibrium given by: 
\begin{align}\label{eq: NashQuad}
\xEqFin{}(p)=-(I-\Kmat\Zmat)^{-1}p.
\end{align}


Given $x$, the cost of the social planner is  $\socCostFin(\strategyFin)=\frac{1}{2}\sum_{i\in I}
(x_i- \xi_i)^2$ for $\xi_i \in \mathbb{R}$, where $\socOptFin=(\xi_i)_{i \in I}$ is the planner's socially optimal strategy profile. Moreover, from \eqref{eq: ExterFin}, the externality caused by player $i$ given the strategy profile $x$ is \(\exterFin_i(\strategyFin)= \xi_i + \kI_i \agg_i(\strategyFin)\).


We consider the learning dynamics, where players update their strategies using best response. Given any strategy profile $x$ and incentive $p$, the best response of player \(i\) is \(\brUpdate_i(\strategyFin_{-i},\incentiveFin_i) = \kI_i\agg_i(\strategyFin)-\incentiveFin_i\) and the best response vector \(\brUpdate(x,p) = (\brUpdate_i(x_{-i},p_i))_{i\in\playerSet}\). Thus the strategy update is \(\stratUpdateFin(\strategyFin,\incentiveFin) =BR(\strategyFin, \incentiveFin)= \Kmat\Zmat \strategyFin - \incentiveFin\). Then, the discrete-time leaning dynamics \eqref{eq: FinUpdateX} -- \eqref{eq: FinUpdateP} can be written as follows: 
\begin{subequations}
\begin{align}
    \strategyFin_{k+1} &= (1-\stepX_k)\strategyFin_k + \stepX_k \lr{\Kmat\Zmat \strategyFin_k - \incentiveFin_k},\label{eq: strategyQuad}\\
    \incentiveFin_{k+1} &= (1-\stepP_k)\incentiveFin_k + \stepP_k\lr{\xi + \Kmat \Zmat\strategyFin_k},\label{eq: PriceQuad}
\end{align}
\end{subequations}
and the step-sizes  \(\{\stepX_k\}_{k=1}^{\infty},\{\stepP_k\}_{k=1}^{\infty}\) satisfy Assumption \ref{assm: StepSizeAssumption}.


We show that there exists a unique $\peq$ such that the induced equilibrium strategy profile $x^*(\peq)$ equals to the socially optimal strategy $\socOptFin$. Moreover, we also provide sufficient condition on the Leontief matrix under which our learning dynamics converges the unique socially optimal incentive mechanism. 
\begin{prop}\label{prop: QuadIncentive}
The unique socially optimal incentive mechanism is \(\pEqFin{} = (I-\Kmat\Zmat)\Bmat\). Furthermore if the real part of eigenvalues of \((\Qmat-\Kmat\Zmat)\) is positive, i.e. \(\spec(L)\subset \C_{+}^{\circ}\), then the discrete-time learning dynamics \eqref{eq: strategyQuad}-\eqref{eq: PriceQuad} satisfy \(\lim_{k\to \infty} (\strategyFinDis_k,\incentiveFinDis_k) = (\socOptFin,\pEqFin{})\). 
\end{prop}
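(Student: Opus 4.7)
The first claim follows from (\ref{eq: NashQuad}) combined with Proposition~\ref{prop: Alignment}. By Proposition~\ref{prop: Alignment}, any fixed-point incentive $\pEqFin{} \in \Peq$ satisfies $\xEqFin{}(\pEqFin{}) = \socOptFin$, where $\socOptFin$ is the unique minimizer of the strictly convex social cost $\socCostFin$. Since (\ref{eq: NashQuad}) identifies $p \mapsto \xEqFin{}(p) = -(I-\Kmat\Zmat)^{-1}p$ as an affine bijection, the equation $\xEqFin{}(\pEqFin{}) = \socOptFin$ has a unique solution for $\pEqFin{}$; direct substitution into the fixed-point equation $\exterFin(\xEqFin{}(\pEqFin{})) = \pEqFin{}$ verifies that this solution equals $(I-\Kmat\Zmat)\Bmat$.

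For the convergence claim, I would verify condition (C2) of Theorem~\ref{thm: ConvergenceFin}. First, check Assumption~\ref{assm: ConvergenceStrategy}: for any fixed $p$, the linear iteration (\ref{eq: strategyQuad}) expressed in the deviation $y_k := \strategyFinDis_k - \xEqFin{}(p)$ reduces to $y_{k+1} = [(1-\stepX_k)I + \stepX_k \Kmat\Zmat]\, y_k$; under Assumption~\ref{assm: StepSizeAssumption} and the spectral condition (noting that $\spec(L) \subset \C_+^{\circ}$ iff $\spec(I-\Kmat\Zmat) \subset \C_+^{\circ}$, since $\lambda \mapsto 1/\lambda$ preserves the sign of the real part), this is a standard convergent averaging scheme, so $\strategyFinDis_k \to \xEqFin{}(p)$. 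Next, derive the slow continuous-time ODE (\ref{subeq:p_fin1}) explicitly. Using the externality formula together with (\ref{eq: NashQuad}) and the identity $\Kmat\Zmat(I-\Kmat\Zmat)^{-1} + I = (I-\Kmat\Zmat)^{-1} = L$,
\[
\dot{\incentiveFinSep}(t) \;=\; \exterFin(\xEqFin{}(\incentiveFinSep(t))) - \incentiveFinSep(t) \;=\; \Bmat - L\,\incentiveFinSep(t) \;=\; -L\,(\incentiveFinSep(t) - \pEqFin{}),
\]
a linear ODE with unique equilibrium $\pEqFin{}$.

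Because $\spec(L) \subset \C_+^{\circ}$, the matrix $-L$ is Hurwitz, so the Lyapunov equation $L^{\top} P + P L = I$ admits a unique positive definite solution $P$. Setting $V(p) := (p-\pEqFin{})^{\top} P (p-\pEqFin{})$ yields a continuously differentiable function vanishing only at $\pEqFin{}$ that is positive definite and decrescent as a function of $p - \pEqFin{}$. A direct computation gives $\nabla V(p)^{\top}\bigl(\exterFin(\xEqFin{}(p)) - p\bigr) = -2(p-\pEqFin{})^{\top} P L (p-\pEqFin{}) = -\|p-\pEqFin{}\|^{2}$, verifying (C2) with $\omega(t) = t^{2}$. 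Theorem~\ref{thm: ConvergenceFin} then delivers $(\strategyFinDis_k,\incentiveFinDis_k) \to (\socOptFin, \pEqFin{})$. The only mildly nontrivial step is the algebraic simplification of $\exterFin(\xEqFin{}(p)) - p$ into the clean linear form $-L(p-\pEqFin{})$; with that in hand, everything else is textbook Hurwitz--Lyapunov theory, so I do not anticipate any serious obstacle beyond careful sign bookkeeping between $\socOptFin$, $\Bmat$, and the externality.
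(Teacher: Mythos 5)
Your proof is correct and follows essentially the same route as the paper's: it pins down the unique fixed point from the affine bijection \(p \mapsto \xEqFin{}(p)\), verifies Assumption \ref{assm: ConvergenceStrategy} from the Hurwitz property of \(-(I-\Kmat\Zmat)\), and verifies (C2) with the quadratic Lyapunov function built from \(L^{\top}P + PL = I\), arriving at the same identity \(\nabla V(p)^{\top}\bigl(\exterFin(\xEqFin{}(p))-p\bigr) = -\|p-\pEqFin{}\|^{2}\) that the paper obtains by a slightly longer chain of substitutions (the paper instead establishes existence and uniqueness of \(\pEqFin{}\) by inverting a \(2|\playerSet|\times 2|\playerSet|\) block system, which is equivalent). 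One small care point: since the strategy set here is all of \(\mathbb{R}^{|\playerSet|}\), the existence half of Proposition \ref{prop: Alignment} (which uses Brouwer on a compact set) does not apply, but you only invoke its alignment direction and establish existence by direct substitution into \(\exterFin(\xEqFin{}(\cdot))\), which is exactly how the paper handles it as well.
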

 
From \eqref{eq: NashQuad}, we know that for any $p$, there exists a unique equilibrium strategy profile $x^*(p)$ that is linear in $p$. Then, we obtain the socially optimal incentive $\peq$ that satisfies $x^*(\peq)=\socOptFin$. Additionally, $\peq$ also satisfies that $e(\socOptFin( \peq)) = \peq$, and therefore $\peq$ is a fixed point of the learning dynamics. To show the convergence results in Proposition \ref{prop: QuadIncentive} we verify the conditions in Theorem \ref{thm: ConvergenceFin} holds. The condition \(\spec(L)\subset\C_{+}^{\circ}\) ensures that Assumption \ref{assm: ConvergenceStrategy} holds. Indeed, we show that the strategy update \eqref{eq: strategyQuad} with fixed incentives asymptotically track the flow of a continuous-time linear dynamical system. The condition \(\spec(L)\subset \C_{+}^{\circ}\) ensures that  the flow of continuous-time dynamical system asymptotically converges to fixed points of \eqref{eq: strategyQuad} with fixed incentives.
Finally, we verify that if \(\spec(L)\subset \C_{+}^{\circ}\) then \(K=-L\subset \C_{-}^{\circ}\) and there  exists a Lyapunov function that satisfies (C2) in Theorem \ref{thm: ConvergenceFin}. In particular, the Lyapunov function is given by 
\[V(\incentiveFin) = (\incentiveFin-\pEqFin{})^\top M(\incentiveFin-\pEq)\]
where $M$ is a matrix that satisfies $K^{\top}M + M K = -I$.\footnote{The existence of a matrix \(M\) is guaranteed by the Lyapunov theorem \cite{callier2012linear} as \(\spec(K)\subset \C_{-}^{\circ}\).}
\subsection{Atomic Cournot Competition}\label{ssec: Cournot}
A finite set of firms $\playerSet$ compete in a single market. The strategy of each firm \(i\in\playerSet\) is its production quantity \(x_i\). Given any strategy profile \(x=(x_i)_{i\in\playerSet}\), the price of the good is $\price(x) = \priceOne - \priceTwo \sum_{i\in\playerSet}x_i$ with $\priceOne, \priceTwo >0$. The per-unit production cost of the good is \(\prodCost\). Then, the cost function of firm $i \in \mathcal{I}$ (written as negative of the profit) is given by:
\begin{align}
    \lossFin_i(x) & = - x_i \price(x) +  \prodCost x_i
\end{align}

A social planner designs an incentive mechanism that charges each player $i$ with payment $p_i x_i$. The total cost of firm \(i\in \playerSet\) given \(x\) and \(p\) is: 
\begin{align*}
    c_i(x, p) = -x_i \price(x) + (\prodCost+p_i) x_i 
\end{align*}
The game has a unique Nash equilibrium given by: \footnote{We assume that $\theta$ is large enough such that $x^*(p) > 0$ for all $p$ in a neighborhood of the socially optimal incentive $p^{\dagger}$.}
\begin{align}\label{eq: NashCournot}
    \xEqFin{i}(p) = \frac{1}{\priceTwo(|\playerSet|+1)} \left(\priceOne-\prodCost -|\mathcal{I}| \incentiveFin_i + \sum_{j \neq i}\incentiveFin_j\right)
\end{align}

The goal of the social planner is to minimize the aggregate cost of players while also accounting for the environmental cost of good production, which is unpriced in equilibrium. We model the environmental cost to be a  quadratic function of production following \cite{cai2019role}. 
Thus, the social cost function is
\(
    \Phi(x) =  {\sum_{i=1}^{n}  \ell_i(x)}+ \lam{\sum_{i=1}^{n} x_i^2}
\)
where $\lam> 0$ is a parameter that determines the relative weight between the firm costs and environmental cost. Finally, the externality \eqref{eq: ExterFin} caused by of a firm \(i\in\playerSet\) is $e_i(x)= 2 \lambda x_i + \priceTwo\sum_{j \neq i} x_j$. 

We consider best response strategy updates. Given any $x_{-i}$, the best response of firm \(i\in\playerSet\) is:
\[
BR_i(\strategyFin_{-i},\incentiveFin_i) = \frac{\priceOne - \priceTwo \sum_{j \neq i} x_j - \nu - p_i}{2 \priceTwo}.
\] 
Following \eqref{eq: FinUpdateX} -- \eqref{eq: FinUpdateP}, we can write the updates of strategies and incentives as follows: 
\begin{subequations}
\begin{align}
    \strategyFin_{i,k+1} &= (1-\stepX_k)\strategyFin_{i,k} + \stepX_k \lr{\frac{\priceOne - \priceTwo \sum_{j \neq i} x_{j,k} - \nu - p_{i,k}}{2 \priceTwo}}\label{eq: CournotStrat}, \\
    \incentiveFin_{i,k+1} &= (1-\stepP_k)\incentiveFin_{i,k} + \stepP_k ( \priceTwo \sum_{j \neq i} \strategyFin_{j, k} + 2 \lambda \strategyFin_{i,k}).\label{eq: CournotIncentive}
\end{align}
\end{subequations}
and the step-sizes  \(\{\stepX_k\}_{k=1}^{\infty},\{\stepP_k\}_{k=1}^{\infty}\) satisfy Assumption \ref{assm: StepSizeAssumption}.

We can show that for any fixed $p$, the best response learning dynamics \eqref{eq: CournotStrat} converges to a Nash equilibrium $x^*(p)$ associated with $p$. Indeed, we show that the strategy update \eqref{eq: CournotStrat} with fixed incentives asymptotically track the flow of a continuous-time linear dynamical system whose flow asymptotically converges to the Nash equilibrium \(\xEqFin{}(p)\). Thus, Assumption \ref{assm: ConvergenceStrategy} is satisfied. 

The next proposition shows that the optimal incentive $\peq$ is unique. Moreover, the incentive vectors induced by \eqref{eq: CournotIncentive} converge to the socially optimal incentive $\peq$ if the weight of environmental cost, $\lambda$, is sufficiently high. 
\begin{prop}\label{prop: CournotConvergence}
There exists a unique socially optimal incentive mechanism $\incentiveFin^{\dagger}$ that satisfies $\incentiveFin^{\dagger} = \exterFin(\strategyFin^*(\incentiveFin^{\dagger}))$. Given $\peq$, the induced equilibrium strategy profile is socially optimal, i.e.  $\strategyFin^*(\incentiveFin^{\dagger}) = \strategyFin^{\dagger}$. Moreover, the discrete-time learning dynamics \eqref{eq: CournotStrat}-\eqref{eq: CournotIncentive} satisfy $\lim_{k \rightarrow \infty} (\strategyFin_k, \incentiveFin_k) = (\socOptFin, \pEqFin{})$ if $\lam > \priceTwo$. \end{prop}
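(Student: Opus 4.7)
The plan is to verify the hypotheses of Propositions \ref{prop: Alignment}, \ref{prop: UniquenessOfPInitFin}, and Theorem \ref{thm: ConvergenceFin} by explicit matrix calculations, exploiting the fact that both the equilibrium map $x^{*}(p)$ in \eqref{eq: NashCournot} and the externality $e$ are affine. Set $n = |\playerSet|$. The externality takes the form $e(x) = M_e\, x$ with $M_e = (2\lambda - \beta)\, I + \beta\, \mathbf{1}\mathbf{1}^{\top}$, and \eqref{eq: NashCournot} gives $x^{*}(p) = c + \tfrac{1}{\beta(n+1)}\bigl(\mathbf{1}\mathbf{1}^{\top} - (n+1)\, I\bigr)\, p$ with $c \in \R^{n}$ depending only on $\alpha - \nu$.

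\textbf{Existence and uniqueness of $p^{\dagger}$.} Existence of a fixed point and social optimality of the induced equilibrium are immediate from Proposition \ref{prop: Alignment}. For uniqueness, I would verify condition (ii) of Proposition \ref{prop: UniquenessOfPInitFin} by checking strict monotonicity of $e$: since $\langle e(x) - e(x'),\, x - x' \rangle = (x-x')^{\top} M_e\, (x-x')$, it suffices that $M_e$ be positive definite. The eigenvalues of $M_e$ are $2\lambda + (n-1)\beta$ on the span of $\mathbf{1}$ and $2\lambda - \beta$ on its orthogonal complement; both are strictly positive under the hypothesis $\lambda > \beta$, so $M_e \succ 0$ and the fixed-point set is a singleton.

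\textbf{Convergence.} Assumption \ref{assm: ConvergenceStrategy} is already handled in the paragraph preceding the proposition: with fixed $p$, the best-response strategy update asymptotically tracks a linear ODE whose drift matrix $-\tfrac{1}{2} I - \tfrac{1}{2}\mathbf{1}\mathbf{1}^{\top}$ has eigenvalues $-(n{+}1)/2$ and $-1/2$, all strictly negative, so $x_k \to x^{*}(p)$. It then remains to verify one of (C1), (C2) in Theorem \ref{thm: ConvergenceFin} for the slow continuous-time system $\dot{p} = e(x^{*}(p)) - p$. I would use (C1). From $\partial x_i^{*}/\partial p_j = 1/(\beta(n+1))$ for $j \neq i$ and $\partial x_j^{*}/\partial p_j = -n/(\beta(n+1))$, a direct computation gives, for $i \neq j$,
\[
\frac{\partial e_i(x^{*}(p))}{\partial p_j} = \frac{2\lambda}{\beta(n+1)} + \beta\!\left[-\frac{n}{\beta(n+1)} + (n-2)\cdot\frac{1}{\beta(n+1)}\right] = \frac{2(\lambda - \beta)}{\beta(n+1)},
\]
which is strictly positive precisely under $\lambda > \beta$. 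For the boundary requirement, $e_i(x^{*}(0)) = (2\lambda + (n-1)\beta)(\alpha - \nu)/(\beta(n+1)) > 0$ in the regime $\alpha > \nu$ where Nash production is positive, and as $p \to \infty$ componentwise, $x_i^{*}(p) \to -\infty$ linearly in $p_i$, so $e_i(x^{*}(p)) - p_i \to -\infty$. Both clauses of (C1) are thus satisfied.

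\textbf{Expected obstacle.} The only nontrivial step is the cross-partial computation; it is what produces the threshold $\lambda > \beta$ in the proposition. I note in passing that route (C2) would succeed for any $\lambda > 0$ via a quadratic Lyapunov function, because the Jacobian of $e(x^{*}(p)) - p$ equals $-\tfrac{2\lambda}{\beta} I + \tfrac{2(\lambda - \beta)}{\beta(n+1)}\,\mathbf{1}\mathbf{1}^{\top}$ and has eigenvalues $-2(\lambda + n\beta)/[\beta(n+1)]$ and $-2\lambda/\beta$, both strictly negative. However, (C1) yields the cleanest statement and matches the hypothesis $\lambda > \beta$ stated in the proposition, which is the route I would take.
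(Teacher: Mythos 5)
Your overall strategy is sound and your key computations check out, but there is one genuine gap and one place where you under-deliver on the statement. The gap: you invoke Proposition \ref{prop: Alignment} for existence of $p^{\dagger}$, but that proposition's proof rests on Brouwer's theorem applied to the compact range of the equilibrium map, and the Cournot strategy space here is unbounded (the paper flags exactly this in a footnote to its own proof and instead establishes existence by a direct block-matrix computation with a Schur complement). The fix is cheap within your own framework: writing $e(x)=\Gamma x$ with $\Gamma=(2\lambda-\delta)I+\delta\mathbbm{1}\mathbbm{1}^{\top}$ and $x^{*}(p)=c+\Omega p$, the fixed-point equation $p=\Gamma(c+\Omega p)$ is affine, and $I-\Gamma\Omega$ has eigenvalues $1+\tfrac{2\lambda+(n-1)\delta}{\delta(n+1)}$ (on the span of $\mathbbm{1}$) and $\tfrac{2\lambda}{\delta}$ (on its orthogonal complement), both nonzero, so existence and uniqueness hold for every $\lambda>0$. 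This also repairs the second issue: your uniqueness argument via positive definiteness of $M_e$ only works when $2\lambda>\delta$, whereas the proposition asserts uniqueness unconditionally.

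For convergence you take a genuinely different route from the paper. The paper verifies (C2): it constructs $V(p)=(p-p^{\dagger})^{\top}L(p-p^{\dagger})$ with $L$ solving the Lyapunov equation for $\Gamma\Omega$, and the hypothesis $\lambda>\delta$ enters through a Gershgorin argument guaranteeing that $\Gamma\Omega$ is Hurwitz. You instead verify (C1): your cross-partial $\partial e_i(x^{*}(p))/\partial p_j = 2(\lambda-\delta)/(\delta(n+1))$ is correct and isolates exactly where the threshold $\lambda>\delta$ comes from, and the boundary clauses hold as you state (under the paper's standing assumption that $\theta$ is large enough for positive equilibrium production). Both routes are legitimate applications of Theorem \ref{thm: ConvergenceFin}; yours explains the threshold, the paper's produces an explicit Lyapunov certificate. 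Your closing observation is also correct and worth emphasizing: the Jacobian of the slow vector field is $\Gamma\Omega-I$, whose eigenvalues $-2(\lambda+n\delta)/(\delta(n+1))$ and $-2\lambda/\delta$ are negative for every $\lambda>0$, so a quadratic Lyapunov function adapted to $\Gamma\Omega-I$ rather than $\Gamma\Omega$ would verify (C2) with no condition relating $\lambda$ and $\delta$ — the hypothesis $\lambda>\delta$ is an artifact of the chosen proof technique, not of the result. (For Assumption \ref{assm: ConvergenceStrategy} your direct eigenvalue computation of the drift matrix $-\tfrac12 I-\tfrac12\mathbbm{1}\mathbbm{1}^{\top}$ is a fine substitute for the paper's potential-game argument.)
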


Recall that $\lambda$ is the weight of environmental cost in the social cost function, and $\delta$ is the increase of firm cost with respect to the increase of production level. The sufficient condition $\lam > \priceTwo$ states that if the social planner assigns higher weight to the environmental cost compared to the per-unit increase of firm cost, then the adaptive incentive mechanism can asymptotically induce a socially optimal outcome. 

The proof of Proposition \ref{prop: CournotConvergence} follows similarly to that of Proposition \ref{prop: QuadIncentive}. We show that there is a unique incentive $\peq$ such that the corresponding Nash equilibrium as in \eqref{eq: CournotStrat} equals to the socially optimal strategy profile, and $\peq$ is a fixed point of the discrete-time learning dynamics \eqref{eq: CournotIncentive}. Moreover, we show that when $\lam > \priceTwo$, we can construct a Lyapnov function that satisfies (C2) in Assumption $\eqref{assm: ConvergenceStrategy}$. Therefore, following Theorem \eqref{thm: ConvergenceFin}, we can conclude that the discrete-time learning dynamics converges to a socially optimal outcome.
\subsection{Non-atomic routing games}\label{ssec: Routing}

A traveler population with total demand of 1 make routing decisions on a parallel-route network, where a single origin - destination pair is connected by a finite set of routes \(S\). The strategy of the traveler population is $\strategyPop= (\strategyPop^j)_{j \in S}$, where $\strategyPop^j$ is the mass of travelers who choose route $j \in S$. The population's strategy set is $\strategySetPop=\{\strategyPop|\sum_{j \in S} \strategyPop^j=1, ~ \strategyPop^j \geq 0, ~\forall j \in S\}$.

Given any $\strategyPop$ and any route $j \in S$, the travel time cost $\ell^j(\strategyPop^j)$ is a \emph{strictly-increasing} and \emph{convex} function of the mass of travelers who take route $j \in S$. This reflects the congestible nature of the traffic routes and the fact that the travel time increases faster as more travelers take that route. 

A social planner designs a tolling mechanism $\incentivePop{}{} = (\incentivePop{}{j})_{j \in S}$, where the toll price of route $j$ is $\incentivePop{}{j}$. Given any $\strategyPop$ and $\incentivePop{}{}$, the total cost experienced by travelers who take route $j$ is \(
      \costPop{}{j}(\strategyPop, \incentivePop{}{}) = \latency^j(\strategyPop^j) + \incentivePop{}{j}
\). 

Given any toll vector $\incentivePop{}{}$, the routing game has a unique Nash equilibrium  \(\xEqPop{}{}(\incentivePop{}{})\). The goal of the social planner is to minimize the total cost of all routes in the network, i.e.
\(
\socCostPop(\strategyPop) = \sum_{j \in S} \strategyPop^j \latency_j(\strategyPop^j).
\)
We can check that  \(\socCostPop(\strategyPop)\) is strictly convex in $\strategyPop$, and thus the socially optimal strategy $\tilde{x}^{\dagger}$ is unique.  
Finally, following from  \eqref{eq: ExterPop}, the externality caused by travelers on route  \(j\in S\) is \(
    \exterPop^j(\strategyPop)= \tilde{x}^j\frac{d\ell^j(\tilde{x}^j)}{d{\tilde{x}^j}}.
\)

We consider perturbed best response strategy updates. Given any $\strategyPop$, the perturbed best response strategy is \[\tilde{f}^i(\strategyPop, \incentivePop{}{})= \frac{\exp(-\para \costPop{}{i}(\strategyPop,\incentivePop{}{}))}{\sum_{j\in S}\exp(-\para \costPop{}{j}(\strategyPop,\incentivePop{}{}))},\]
where $\eta$ evaluates the sensitivity of travelers' route choices with respect to the costs. We note that as $\para \to \infty$, the perturbed best response strategy reduces to a best response strategy that only chooses routes with the minimal costs. 

The discrete-time learning dynamics is: 
\begin{subequations}
\begin{align}
    \strategyPop^j_{k+1} &=(1-\stepX_k)\strategyPop^j_{k} + \stepX_k \frac{\exp(-\para \costPop{}{j}(\strategyPop_k,\incentivePop{k}{}))}{\sum_{j\in S}\exp(-\para \costPop{}{j}(\strategyPop_k,\incentivePop{k}{}))}, \label{eq: stratRouting} \\
    \incentivePop{k+1}{j} &= (1-\stepP_k)\incentivePop{k}{j}+\stepP_k \strategyPop_{k}^j
    \frac{d \latency^j(\strategyPop_{k}^{j})}{d \strategyPop_k^j}. \label{eq: incentiveRouting}
\end{align}
\end{subequations}
and the step-sizes  \(\{\stepX_k\}_{k=1}^{\infty},\{\stepP_k\}_{k=1}^{\infty}\) satisfy Assumption \ref{assm: StepSizeAssumption}. Moreover, we can show that for any fixed \(\incentivePop{}{}\), the perturbed best response dynamics \eqref{eq: stratRouting} converges to the perturbed equilibrium. Indeed, due to Assumption \ref{assm: StepSizeAssumption}-i) the discrete-time updates \eqref{eq: stratRouting}, with fixed incentive \(\incentivePop{}{}\), tracks the flow of a cooperative continuous-time dynamical system \cite{hirsch1985systems} whose flows converges to the perturbed equilibrium. Thus Assumption \ref{assm: ConvergenceStrategy} holds. 

The next proposition shows that the optimal incentive $\pEqPop{}{}$ is unique. Moreover, the incentive vectors induced by \eqref{eq: incentiveRouting} converge to the socially optimal incentive $\peq$.

\begin{prop}\label{prop: RoutingOptimalIncentive}
As $\eta \to \infty$, the strategies and incentives induced by the discrete-time learning dynamics \eqref{eq: stratRouting}-\eqref{eq: incentiveRouting} converge to a unique fixed point, i.e. $
    \lim_{k\to\infty} (\strategyPop_k,\incentivePop{k}{}) = (\xEqPop{}{}(\pEqPop{}{}),\pEqPop{}{})$, where $\xEqPop{}{}(\peq)$ is a Nash equilibrium given $\peq$, and $\peq$ satisfies \(\exterPop(\xEqPop{}{}(\pEqPop{}{}))=\pEqPop{}{}\). Additionally, $\ptildeeq$ is the unique optimal incentive mechanism, and the corresponding Nash equilibrium $\xEqPop{}{}(\peq) = \tilde{x}^{\dagger}$. 
\end{prop}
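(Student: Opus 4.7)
The plan is to apply the general framework of Propositions \ref{prop: Alignment}, \ref{prop: UniquenessOfPInitFin}, and Theorem \ref{thm: ConvergenceFin} (in their non-atomic versions) to the routing setting, with the limit $\eta \to \infty$ handling the bias introduced by perturbed best response.

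First, I would verify Assumption \ref{assm: MonotonicCostPop}: the Jacobian of $\costPop{}{}(\strategyPop,\incentivePop{}{})$ is diagonal with strictly positive entries $\ell^{j\prime}(\strategyPop^j) > 0$, giving strict monotonicity of the cost and hence uniqueness of $\xEqPop{}{}(\incentivePop{}{})$ for every toll vector. Next, I would verify the hypothesis of Proposition \ref{prop: UniquenessOfPInitFin}(ii): since $\exterPop^j(\strategyPop) = \strategyPop^j \ell^{j\prime}(\strategyPop^j)$ depends only on $\strategyPop^j$, its Jacobian is diagonal with entries $\ell^{j\prime}(\strategyPop^j) + \strategyPop^j \ell^{j\prime\prime}(\strategyPop^j) > 0$ (using strict monotonicity and convexity of $\ell^j$), so the externality map is strictly monotone. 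Proposition \ref{prop: UniquenessOfPInitFin} then gives that $\Ptildeeq$ is a singleton, and Proposition \ref{prop: Alignment} identifies its unique element $\pEqPop{}{}$ with a socially optimal incentive, so $\xEqPop{}{}(\pEqPop{}{}) = \socOptPop$.

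For convergence, I would verify condition (C1) of Theorem \ref{thm: ConvergenceFin} in the limit $\eta \to \infty$. On a parallel-route network, the exact Nash equilibrium equalizes $\ell^j(\strategyPop^j) + \incentivePop{}{j}$ across used routes subject to the demand constraint $\sum_j \strategyPop^j = 1$. Implicit differentiation shows that raising $\incentivePop{}{i}$ strictly decreases $\strategyPop^{*i}(\incentivePop{}{})$ and strictly increases $\strategyPop^{*j}(\incentivePop{}{})$ for every other used route, so $\partial \exterPop^j(\xEqPop{}{}(\incentivePop{}{}))/\partial \incentivePop{}{i} > 0$ for $i \neq j$, which is the off-diagonal positivity demanded by (C1). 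The boundary condition follows because $\strategyPop^{*j} \in [0,1]$, so $\exterPop^j(\xEqPop{}{}(\incentivePop{}{})) = \strategyPop^{*j}\ell^{j\prime}(\strategyPop^{*j})$ remains bounded as $\incentivePop{}{j} \to \pm\infty$, forcing $\exterPop^j - \incentivePop{}{j}$ to take the required sign in the limit.

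With (C1) established, the two-timescale argument of Theorem \ref{thm: ConvergenceFin} applies: Assumption \ref{assm: ConvergenceStrategy} is in place (as noted in the text preceding the proposition, the perturbed best response dynamics with fixed $\incentivePop{}{}$ tracks a cooperative continuous-time system that converges to the perturbed equilibrium), so the slow incentive iterates asymptotically track $\dot{\tilde{p}}(t) = \exterPop(\xEqPop{}{}(\tilde{p}(t))) - \tilde{p}(t)$, whose flow under (C1) is cooperative with $\pEqPop{}{}$ as the globally attracting equilibrium. The main obstacle I anticipate is the limit $\eta \to \infty$: for finite $\eta$, the fast dynamics converges to the logit equilibrium rather than the exact $\xEqPop{}{}$, so I would need to argue that the perturbation vanishes uniformly on compacta, preserving the cooperative structure of the slow system and the location of its fixed point, so that the nested-limit statement in the proposition follows from a standard continuity argument together with \cite[Chapter 6]{borkar1997stochastic}.
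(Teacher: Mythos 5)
Your proposal follows essentially the same route as the paper's argument: establish strict monotonicity of the cost and of the externality map (both have diagonal, positive Jacobians here) to invoke Propositions \ref{prop: Alignment} and \ref{prop: UniquenessOfPInitFin} for existence, uniqueness, and social optimality of $\pEqPop{}{}$, and then verify condition (C1) of Theorem \ref{thm: ConvergenceFin} for the slow toll dynamics, using the cooperative structure of $\tilde{e}(\xEqPop{}{}(\cdot))$ and the boundedness of the equilibrium externality to get the sign conditions at infinity. Your additional care about the $\eta\to\infty$ limit (logit versus exact equilibrium) is a reasonable refinement of the same argument rather than a different approach.
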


In the proof of Proposition \ref{prop: RoutingOptimalIncentive}, we first show that the externality function $\exterPop$ is monotonic in $\strategyPop$. Thus, the existence and uniqueness of fixed point toll price follows from Propositions \ref{prop: Alignment} and \ref{prop: UniquenessOfPInitFin}. Additionally, we show that the value of externality in equilibrium $\tilde{e}(\xEqPop{}{}(p))$ satisfies (C1) in Theorem \ref{thm: ConvergenceFin}. Therefore, we can conclude that the discrete-time learning dynamics converge to the socially optimal outcome. 

\section{Conclusion}\label{sec: Conclusion}
We propose a joint strategy and incentive update scheme for atomic and nonatomic games so that the emergent Nash equilibrium minimizes a social planner's cost (or equivalently maximizes social welfare). We assume that the planner, at each time-step, can modify the costs of players by setting a payment. There are three key features of the proposed scheme: first, the incentives are updated at a slower timescale as compared to the players' strategy update. Second, the incentive update is based on the externality caused by the players' strategy evaluated as the difference between players' marginal cost and the planner's marginal cost. Third, the incentive update is agnostic to the specific strategy update deployed by players, and relies on the current strategy profile. 

We show that the fixed point of the incentive and strategy update corresponds to an optimal incentive which induces a Nash equilibrium that is socially optimal. We provide sufficient conditions under which the proposed dynamic updates converge to its fixed points. We study the behavior of the proposed incentive updates to three games of practical significance: atomic networked quadratic aggregative games, atomic Cournot competition and nonatomic network routing games by verifying the proposed sufficient conditions of convergence.

\appendix 
\section{Proofs}

\subsection{Proof of Proposition \ref{prop: Alignment}}

{\begin{proof} We provide a detailed proof for the setting of atomic games as the proof for the non-atomic game follows similarly.

\noindent \textbf{Atomic game \(\finGame\):} We show that \(\Peq\) is non-empty. That is, there exists \(\pEqFin{}\) such that \(\exterFin(\xEqFin{}(\pEqFin{}))=\pEqFin{}\). Define a function \(\theta(p) = \exterFin(\xEqFin{}(p))\). Thus, the remaining proof is based on application of Brouwer's fixed point theorem to show existence of the fixed points of function \(\theta(\cdot)\).

We note that \(\theta(p)\) is a continuous function based on the setup presented in Sec \ref{sec: Model}.
 Furthermore, let's
define \(K\defas \{\theta(p): p\in \R^{|\playerSet|}\}\subset \R^{|\playerSet|}\). We claim that the set \(K\) is compact.
Indeed, this follows by two observations. First, the externality function \(\exterFin(\cdot)\) is continuous. Second,  the range of the function \(\xEqFin{}(\cdot)\) is \(\strategySetFin\) which is a compact space.  These two observations ensure that \(\theta(p)= \exterFin(\xEqFin{}(p))\) is a bounded function. 
Let \(\tilde{K}\defas \textsf{conv}(K)\) be the convex hull of \(K\), which in turn is also a compact set. Let's denote the restriction of function \(\theta\) on the set \(\tilde{K}\) as \(\theta_{|\tilde{K}}:\tilde{K}\ra \tilde{K}\) where \(\theta_{|\tilde{K}}(p)=\theta(p)\) for all \(p\in\tilde{K}\). We note that \(\theta_{|\tilde{K}}\) a is continuous function from a convex compact set to itself and therefore Brouwer's fixed point theorem ensures that there exits \(\pEqFin{}\in\tilde{K}\) such that \(\pEqFin{} =\theta_{|\tilde{K}}(\pEqFin{})= \theta(\pEqFin{})\). This concludes the proof about existence of \(\pEqFin{}\). 

Next, we show that incentive \(\pEqFin{}\) aligns Nash equilibrium with social optimality (i.e. for any \(\pEqFin{}\in\Peq\), \(\xEqFin{}(\pEqFin{})=\socOptFin\)).  
Fix \(\pEqFin{}\in\Peq\). For every \(i\in \playerSet\) we have \(\pEqFin{i}=\exterFin_i(\xEqFin{}(\pEqFin{}))\). This implies
\(
    \Der_{\strategyFin_i} \ell_i(\xEqFin{}(\pEqFin{}))+\pEqFin{i} = \Der_{\strategyFin_i} \socCostFin(\xEqFin{}(\pEqFin{}))\) for every \(i\in\playerSet\). This implies
    \begin{align}
       \label{eq: AlignmentSocNash} \JacobianIncentiveFin(\xEqFin{}(\pEqFin{}),\pEqFin{}) = \grad \socCostFin(\xEqFin{}(\pEqFin{})).
    \end{align}

Next, from \eqref{eq: FiniteGameVI} we know that \(\xEqFin{}(\pEqFin{})\) is a Nash equilibrium if and only if 
\begin{align}\label{eq: NASHVIProof}
\langle \JacobianIncentiveFin(\xEqFin{}(\pEqFin{}),\pEqFin{}), \strategyFin- \xEqFin{}(\pEqFin{})  \rangle \geq 0, \quad \forall \ \strategyFin \in \strategySetFin. 
\end{align}
Using \eqref{eq: AlignmentSocNash} and \eqref{eq: NASHVIProof}  the following holds:
\begin{align}\label{eq: SOCFINProof}
\langle \nabla \socCostFin(\xEqFin{}(\pEqFin{})), \strategyFin-\xEqFin{}(\pEqFin{})  \rangle \geq 0, \quad \forall \ \strategyFin \in \strategySetFin.
\end{align}
Comparing \eqref{eq: SOCFINProof} with \eqref{eq: SocCostFiniteVI} we note that \(\xEqFin{}(\pEqFin{})\) is the minimizer of social cost function \(\socCostFin\). This implies \(\xEqFin{}(\pEqFin{})=\socOptFin\) as \(\socOptFin\) is the unique minimizer of social cost function \(\socCostFin\).

\end{proof}

}

\subsection{Proof of Proposition \ref{prop: UniquenessOfPInitFin}}
{\begin{proof} The proof is based on a contradiction argument. 

\begin{itemize}
    \item[(i)] We make the following observation which are central to the proof:
    \begin{itemize}
    \item[\textbf{(O1)}] We note that if  \(\xEqFin{}(p)\in\textsf{int}(X)\) for every \(p\) then the variational inequality characterization \eqref{eq: FiniteGameVI} implies that \(\JacobianIncentiveFin(\xEqFin{}(p),p)=0\) for every \(p\). As a result the externality function \eqref{eq: ExterFin} becomes \(\exterFin(\xEqFin{}(p))=\grad \socCostFin(\xEqFin{}(p))+p\).  
    \item[\textbf{(O2)}] The strict convexity of the social cost function implies that 
    \begin{align*}
        \lara{ \grad \socCostFin(x)-\grad \socCostFin(y),x-y } > 0, \quad \forall \ x,y\in X \ \text{such that} \ x\neq y
    \end{align*}
    \end{itemize}
    
    \noindent{}Suppose there exists two distinct element \(p^\dagger,q^\dagger\in \Peq\). 
    We claim that \(\xEqFin{}(p^\dagger) \neq \xEqFin{}(q^\dagger)\). Indeed, using \textbf{(O1)} and \eqref{eq: GameJacobian} the following holds: 
    \begin{equation}
    \begin{aligned}\label{eq: DistinctJac}
        &p^\dagger_i = -\Der_{x_i}\lossFin(\xEqFin{}(p^\dagger)) ,\quad \forall \ i\in \playerSet\\
        &q^\dagger_i = -\Der_{x_i}\lossFin(\xEqFin{}(q^\dagger)), \quad \forall \ i\in\playerSet.
    \end{aligned}
    \end{equation}
     If \(\xEqFin{}(p^\dagger) = \xEqFin{}(q^\dagger)\) then \eqref{eq: DistinctJac}
    implies \(p^\dagger=q^\dagger\), but these are assumed to be distinct. Thus in the following proof we assume \(\xEqFin{}(p^\dagger)\neq \xEqFin{}(q^\dagger)\).
    
    We note from \textbf{(O1)} that
    \begin{align}\label{eq: DistinctPeq}
        0 = \grad \socCostFin(\xEqFin{}(p^\dagger)), \quad 0 = \grad \socCostFin(\xEqFin{}(q^\dagger)).
    \end{align}
    Substracting the two expressions in \eqref{eq: DistinctPeq} and taking inner product with \(\xEqFin{}(p^\dagger) - \xEqFin{}(q^\dagger)\) we see that 
    \begin{align}
      0 = \lara{\xEqFin{}(p^\dagger) - \xEqFin{}(q^\dagger), \grad \socCostFin(\xEqFin{}(p^\dagger))- \grad \socCostFin(\xEqFin{}(q^\dagger))} 
    \end{align}
    We arrive at a contradiction by noting that \(\xEqFin{}(p^\dagger)\neq \xEqFin{}(q^\dagger)\) and \textbf{(O2)} imply that RHS is strictly positive.
    \item[(ii)] To begin the proof we define \(\Der\ell(x) = (\Der_{x_i}\ell_i(x))_{i\in\playerSet}\). Under this notation, we have \(\exterFin(x) = \grad \socCostFin(x) - \Der \ell(x)\).
    The proof is based on the following observations:
    \begin{itemize}
        \item[\textbf{(O3)}] We claim that  \(\lara{\xEqFin{}(p_1)-\xEqFin{}(p_2),p_1-p_2}<0\) for any two distinct incentives \(p_1\neq p_2\). Indeed, from the variational inequality characterization of Nash equilibrium \eqref{eq: FiniteGameVI} we know that 
        \begin{align*}
    &\langle \Der \lossFin(\xEqFin{}(p_1))+p_1,x_1-\xEqPop{}{}(p_1) \rangle \geq 0, \quad \forall x_1\in\strategySetFin\\
    &\langle \Der \lossFin(\xEqFin{}(p_2))+p_2,x_2-\xEqPop{}{}(p_2) \rangle \geq 0, \quad \forall x_2\in\strategySetFin
\end{align*}
Picking \(x_1=\xEqPop{}{}(p_2)\) and \(x_2=\xEqPop{}{}(p_1)\), and adding the two inequalities in preceding equation we obtain 
\begin{align*}
    \langle\xEqFin{}(p_1)-\xEqFin{}(p_2),p_1-p_2 \rangle \leq -\langle \Der\lossFin(\xEqFin{}{}(p_1)) - \Der\lossFin(\xEqFin{}(p_2)),\xEqFin{}(p_1)-\xEqFin{}(p_2) \rangle \leq  0 
\end{align*}
where the last inequality follows due to the convexity of \(\ell\). 
    \end{itemize}
    
 We prove the uniqueness by contradiction. Suppose there exists two incentives \(p^\dagger,q^\dagger\in\Peq\) such that \(\exterPop(\xEqPop{}{}(p^\dagger))=p^\dagger\) and \(\exterPop(\xEqPop{}{}(q^\dagger))=q^\dagger\). Then we have   
\begin{align*}
   p^\dagger &= \grad \socCostFin(\xEqFin{}(p^\dagger))-\Der\lossFin(\xEqFin{}(p^\dagger)) \\ 
    q^\dagger &= \grad \socCostFin(\xEqFin{}(q^\dagger))-\Der\lossFin(\xEqFin{}(q^\dagger)). 
\end{align*}
Subtracting the two expressions and taking inner product with \(\xEqFin{}(p^\dagger)-\xEqFin{}(q^\dagger)\) we have
\[
\lara{\xEqFin{}(p^\dagger)-\xEqFin{}(q^\dagger),p^\dagger-q^\dagger} = \lara{\xEqFin{}(p^\dagger)-\xEqFin{}(q^\dagger), \exterFin(\xEqFin{}(p^\dagger)) - \exterFin(\xEqFin{}(q^\dagger)) } > 0.
\]
But from \textbf{(O3)} we see that we arrive at a contradiction as \(\lara{\xEqFin{}(p^\dagger)-\xEqFin{}(q^\dagger),p^\dagger-q^\dagger}\leq 0\). 
\end{itemize}

\end{proof}
}
\subsection{Proof of Theorem \ref{thm: ConvergenceFin}}
\begin{proof}
To ensure the convergence of \((x_k,p_k)\) to the fixed point \((x^\dagger,p^\dagger)\) of \eqref{eq: FinUpdateX}-\eqref{eq: FinUpdateP},  we exploit the timescale separation introduced due to Assumption \ref{assm: StepSizeAssumption}. The proof is based on two-timescale dynamical systems theory described in \cite{borkar1997stochastic}. Due to this timescale separation the strategy update evolves faster than the incentive update. This allows us to appropriately decouple the strategy and incentive update  and analyze them separately. 

Note that we can equivalently write \eqref{eq: FinUpdateX}-\eqref{eq: FinUpdateP} as follows:
\begin{equation}\label{eq: EqUpdate}
\begin{aligned}
    \strategyFin_{k+1} &= \strategyFin_k + \stepX_k\lr{ \stratUpdateFin(\strategyFin_k,\incentiveFin_k) - \strategyFin_k } \\
    \incentiveFin_{k+1} &= \incentiveFin_k + \stepX_k \lr{ \frac{\stepP_k}{\stepX_k} \lr{\exterFin(\strategyFin_k)-\incentiveFin_k} },
\end{aligned}
\end{equation}
where \(\lim_{k\ra\infty}\frac{\stepP_k}{\stepX_k}=0\) and \(\lim_{k\ra\infty} \stepX_k = 0\). From two timescale dynamical systems theory  we know that under Assumption \ref{assm: ConvergenceStrategy}  the tuple \((x_k,p_k)\) converges to the set \(\{(\xEqFin{}(\incentiveFin),\incentiveFin):\incentiveFin\in\R^{|\playerSet|}\}\). 
Thus for sufficiently large values of \(k\), the update \(\strategyFin_k\) closely tracks \( \xEqFin{}(\incentiveFin_{k})\). Therefore, we consider the following update to analyze the convergence of the slow incentive update \eqref{eq: FinUpdateP}:
\begin{align}\label{eq: SlowDisFin}
    \incentiveFinSep_{k+1} &= (1-\stepp{k})\incentiveFinSep_{k}+\stepp{k} \mdFin(\xEqFin{}(\incentiveFinSep_k)).
\end{align}
Since the step sizes \(\{\stepP_k\}\) are asymptotically going to zero and is non-summable (Assumption \ref{assm: StepSizeAssumption}-(i)) we can approximate the updates in \eqref{eq: SlowDisFin} by the following continuous-time dynamical system :
\begin{align}
    \dot{\incentiveFinSep}(t) &=  \exterFin(\xEqFin{}(\incentiveFinSep(t))) - \incentiveFinSep(t), \label{subeq:p_fin1}
\end{align}
Convergence of discrete-time updates \eqref{eq: FinUpdateX}-\eqref{eq: FinUpdateP} then hold if the flow of \eqref{subeq:p_fin1} globally converges to \(\Peq\). 

Requirements (C1) in Theorem \ref{thm: ConvergenceFin} is a sufficient condition for convergence of the trajectories of \eqref{subeq:p_fin1} to the set \(\Peq\). This condition is based on cooperative dynamical systems theory \cite{hirsch1985systems}.  
 On the other hand requirement (C2) in Theorem \ref{thm: ConvergenceFin} ensures convergence the trajectories of \eqref{subeq:p_fin1} to the set \(\Peq\) by demanding existence of a Lyapunov function \cite{sastry2013nonlinear} that is strictly positive everywhere except at \(\Peq\) and decreases along the flow of {\eqref{subeq:p_fin1}}.


\end{proof}
\subsection{Proof of Proposition \ref{prop: QuadIncentive}}

\begin{proof}
We first show that the set \(\Peq\) is singleton for the setup in Sec \ref{ssec: Quad}\footnote{Note that we cannot directly use Proposition \ref{prop: Alignment} as that required compactness of strategy space.}.  Then we show that the dynamic update \((x_k,p_k)\) corresponding to \eqref{eq: strategyQuad}-\eqref{eq: PriceQuad} converges to the social optimality \((x^\dagger,p^\dagger)\). 

Note that any element \(p^\dagger\in \Peq\) should satisfy \(p^\dagger = \exterFin(\xEqFin{}(p^\dagger)) = \xi+Aw\xEqFin{}(p^\dagger)\). Moreover, from \eqref{eq: NashQuad} we know that \(\xEqFin{}(\pEqFin{}) = Aw\xEqFin{}(p^\dagger)-p^\dagger\). 
Succintly writing the preceding two relations in matrix form gives us:
\begin{align*}
    \underbrace{\begin{bmatrix}
    I & -\Kmat\Zmat \\ I & I -\Kmat\Zmat
    \end{bmatrix}}_{\Gamma}\begin{bmatrix}
    \pEqFin{}\\ \xEqFin{}(\pEqFin{})
    \end{bmatrix} = \begin{bmatrix}
    \Bmat \\ 0
    \end{bmatrix}.
\end{align*}
We claim that \(\Gamma\) is an invertible matrix\footnote{Invertibility of \(I-Aw\) is a necessary condition for invertiblility of \(\Gamma\).} with the inverse as follows:
\begin{align*}
    \Gamma^{-1} = \begin{bmatrix}
    I-Aw & Aw \\ -I & I
    \end{bmatrix}
\end{align*}
Thus \((\xEqFin{}(\pEqFin{}),\pEqFin{})\) exists and is unique. Moreover \(\pEqFin{} = (I-\Kmat\Zmat)\Bmat\) and \(\xEqFin{}(\pEqFin{}) = -\Bmat= \socOptFin\).  

Next, to ensure that the dynamic update \((x_k,p_k)\) corresponding to \eqref{eq: NashQuad}-\eqref{eq: PriceQuad} converges to the fixed point \((\socOptFin,\pEqFin{})\) we use Theorem \ref{thm: ConvergenceFin}. It is sufficient to show that Assumption \ref{assm: ConvergenceStrategy} and condition \(\textbf{(C2)}\) hold in order to use the results from Theorem \ref{thm: ConvergenceFin} directly.

First, we show that Assumption \ref{assm: ConvergenceStrategy} hold. That is, for any fixed incentive update \((\incentiveFin_k)\equiv p\) the strategy update satisfies \(\lim_{k\ra\infty}x_k = \xEqFin{}(p)\). Indeed, due to Assumption \ref{assm: StepSizeAssumption}, the convergence properties of discrete time updates can be obtained by analysing the corresponding continuous time dynamical system.
That is, we consider the following continuous time dynamical system corresponding to the strategy update:
\begin{align}\label{eq: ContSysQuadStrat}
    \dot{\strategyFinSep}(t) = -(I-\Kmat\Zmat) \strategyFinSep(t)-p.
\end{align}
We note that the trajectories of \eqref{eq: ContSysQuadStrat} satisfy \(\lim_{t\ra\infty} \strategyFinSep(t) = \xEqFin{}(p)\). This is due to the assumption that \(-(I-\Kmat\Zmat)\) is Hurwitz\footnote{A matrix \(A\) is called Hurwitz if \(\spec(A)\subset \C_{-}^{\circ}\).} \cite{callier2012linear}. Thus Assumption \ref{assm: ConvergenceStrategy} holds.

Next, we show that condition \textbf{(C2)} is satisfied which then fulfils all the requirement of Theorem \ref{thm: ConvergenceFin}. We claim that the function \(V(\incentiveFin) = (\incentiveFin-\pEqFin{})^\top L(\incentiveFin-\pEqFin{})\) satisfies \(\textbf{(C2)}\) where \(L\)\footnote{Note that the existence of such matrix \(L\) is guaranteed from Lyapunov theorem \cite{callier2012linear} as \(\sigma(I-\Kmat\Zmat) \subset \C_+^{\circ}\).} is a symmetric positive definite matrix that satisfies the following condition: 
\begin{align}\label{eq: LyapEq}
    (I-\Kmat\Zmat)^{-\top}M + M (I-\Kmat\Zmat)^{-1} = I.
\end{align}
Indeed, \(V(\pEqFin{}) = 0\) and since \(L\) is a positive definite matrix, this means \(V(p)>0\) for all \(p\neq \pEqFin{}\). Furthermore, we compute 
\begin{align*}
    &\grad V(p)^\top (\exterFin(\xEqFin{}(p))-p) = 2(p-\pEqFin{})^\top L (\exterFin(\xEqFin{}(p))-p), \\
    &\underset{(a)}{=} 2(p-\pEqFin{})^\top L \lr{ \Bmat+\Kmat\Zmat \xEqFin{}(p)-p }, \\
    &= 2(p-\pEqFin{})^\top L \lr{ -\xEqFin{}(\pEqFin{})+\Kmat\Zmat \xEqFin{}(p)-p }, \\ 
    &\underset{(b)}{=} 2(p-\pEqFin{})^\top L \lr{ -\xEqFin{}(\pEqFin{})+\xEqFin{}(p) }, \\
    &\underset{(c)}{=} -2(p-\pEqFin{})^\top L(I-\Kmat\Zmat)^{-1}(p-\pEqFin{}), \\ 
    &\underset{(d)}{=} -{(p-\pEqFin{})^\top \lr{L(I-\Kmat\Zmat)^{-1} + (I-\Kmat\Zmat)^{-\top}L}(p-\pEqFin{})  }, \\ 
    &= -(p-\pEqFin{})^\top (p-\pEqFin{}) < 0,
\end{align*}
where \((a)\) is by the definition of externality function \eqref{eq: ExterFin}, \((b),(c)\) is by the Nash equilibrium \eqref{eq: NashQuad} and \((d)\) is by \eqref{eq: LyapEq}.

    This completes the proof.
\end{proof}
\subsection{Proof of Proposition \ref{prop: CournotConvergence}}
Before stating the proof of Proposition \ref{prop: CournotConvergence} we present the following two results which are crucial in the proof of Proposition \ref{prop: CournotConvergence}. First, we prove the Nash equilibrium takes the form stated in \eqref{eq: NashCournot}. Next, we present a technical lemma.  

Below, we state the Nash equilibrium in Cournot competition in terms of incentives. 
\begin{lemma}[Nash equilibrium]\label{lem: NashCournot}
For any given incentive \(p\), the Nash equilibrium is given by 
\begin{align*}
    \xEqFin{}(p) = \frac{\theta-\nu}{\delta(|\playerSet|+1)} \mathbbm{1} - \frac{1}{\delta}p+\frac{1}{\delta(|\playerSet|+1)}\mathbbm{1}\mathbbm{1}^\top p 
\end{align*}
\end{lemma}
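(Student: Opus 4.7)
The plan is to derive the equilibrium formula by solving the first-order optimality conditions of each firm and then inverting the resulting linear system in closed form. Each firm's total cost is $c_i(x,p) = -x_i(\theta - \delta\sum_j x_j) + (\nu+p_i)x_i$, which is strictly convex in $x_i$ (its second partial derivative equals $2\delta > 0$). Under the footnote assumption that $\theta$ is large enough for $x^*(p)$ to lie in the positive orthant near the socially optimal incentive, the non-negativity constraint is inactive and the Nash equilibrium is fully characterized by the stationarity conditions $\partial c_i / \partial x_i = 0$ for every $i \in \playerSet$.

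Next I would compute the derivative, obtaining
\[
-\theta + \delta\bigl(2 x_i + \sum_{j\neq i} x_j\bigr) + \nu + p_i = 0 \quad \forall i \in \playerSet,
\]
and rewrite the system in matrix form as $\delta (I + \mathbbm{1}\mathbbm{1}^\top) x = (\theta-\nu)\mathbbm{1} - p$. The only nontrivial step is inverting the rank-one update $I + \mathbbm{1}\mathbbm{1}^\top$; by the Sherman--Morrison identity and the fact that $\mathbbm{1}^\top\mathbbm{1} = |\playerSet|$,
\[
(I + \mathbbm{1}\mathbbm{1}^\top)^{-1} = I - \frac{\mathbbm{1}\mathbbm{1}^\top}{|\playerSet|+1}.
\]
Applying this to $(\theta-\nu)\mathbbm{1} - p$ and using $(I - \tfrac{\mathbbm{1}\mathbbm{1}^\top}{|\playerSet|+1})\mathbbm{1} = \tfrac{\mathbbm{1}}{|\playerSet|+1}$ then yields
\[
x^*(p) \;=\; \frac{\theta-\nu}{\delta(|\playerSet|+1)}\mathbbm{1} \;-\; \frac{1}{\delta}p \;+\; \frac{1}{\delta(|\playerSet|+1)}\mathbbm{1}\mathbbm{1}^\top p,
\]
which is precisely the claimed expression.

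There is no substantive obstacle in this argument; the only subtlety is the justification that the unconstrained first-order conditions capture the Nash equilibrium, which is handled by the standing assumption on $\theta$ in the footnote. Uniqueness is immediate from strict convexity of each $c_i$ in $x_i$ together with the invertibility of $I + \mathbbm{1}\mathbbm{1}^\top$.
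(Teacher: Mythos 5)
Your proof is correct and follows essentially the same route as the paper: both derive the stationarity conditions $\delta(2x_i + \sum_{j\neq i}x_j) = \theta - \nu - p_i$, assemble them into the linear system with matrix $I + \mathbbm{1}\mathbbm{1}^\top$, and invert it via the Sherman--Morrison formula. Your added remarks on strict convexity and the interiority assumption justifying the unconstrained first-order conditions are a mild strengthening of the paper's argument, not a different approach.
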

\begin{proof}
In the setup of Sec \ref{ssec: Cournot} the variational inequality characterization of Nash equilibrium \eqref{eq: FiniteGameVI} implies that for any given \(p\), \(\xEqFin{}(p)\) is a Nash equilibrium if and only if $J(x^*(p), p ) = 0$. Consequently, \(\xEqFin{}(p)\) satisfies the following
\begin{align*}
    2x_i^*(p) + \sum_{j \neq i} x_j^*(p) = \frac{\theta - \nu - p_i}{\delta}.
\end{align*}
Recasting this in the matrix form gives the following:
\begin{align}\label{eq: MatNash}
    \underbrace{\begin{bmatrix}
    2 & 1 & \dots & 1 \\ 1 & 2 & \dots & 1 \\ \dots & \dots & \dots & \dots \\1  & 1 & \dots & 2
    \end{bmatrix}}_{A} \begin{bmatrix}
    x_1^*(p) 
    \\ x_2^*(p) \\ \dots \\ x_n^*(p)
    \end{bmatrix} = \begin{bmatrix} \frac{\theta - \nu - p_1}{\delta}  \\ \frac{\theta - \nu - p_2}{\delta}  \\ \dots \\ \frac{\theta - \nu - p_n}{\delta} \end{bmatrix}.
\end{align}
Note that $A = I + \mathbbm{1} \mathbbm{1}^{\top}$. Furthermore, by the
Sherman-Morrison formula:
\begin{align}\label{eq: ShermanBlock}
    A^{-1}  
    & = \frac{1}{|\mathcal{I}| +1} \begin{bmatrix} |\mathcal{I}|  
     & -1 & \dots & -1 \\ -1  
     & |\mathcal{I}| & \dots & -1 \\ \dots
     & \dots & \dots & \dots \\ -1  
     & -1 & \dots & |\mathcal{I}|
    \end{bmatrix} = I - \frac{1}{|\playerSet|+1}\mathbbm{1}\mathbbm{1}^\top.
\end{align}
Therefore, by \eqref{eq: MatNash} we have :
\begin{align*}
    \begin{bmatrix}
    x_1^*(p) 
    \\ x_2^*(p) \\ \dots \\ x_n^*(p)
    \end{bmatrix} = \frac{1}{\delta (|\mathcal{I}|+1)} \begin{bmatrix}
   \theta-\nu - |\mathcal{I}| p_1  +\sum_{j\neq 1} p_j \\ \theta-\nu - |\mathcal{I}| p_2 +\sum_{j\neq 2} p_j \\ \dots \\ \theta-\nu - |\mathcal{I}| p_n + \sum_{j\neq n} p_j
    \end{bmatrix}.
\end{align*}
This completes the proof.
\end{proof}

Next, we present a technical lemma that is crucial in the proof of Proposition \ref{prop: CournotConvergence}. 
\begin{lemma}\label{lem: GammaOmega}
Let $\Gamma = (2\lambda - \delta) I +\delta \mathbbm{1} \mathbbm{1}^{\top} $ and $\Omega = - \frac{1}{\delta}I + \frac{1}{\delta (|\mathcal{I}| + 1)}  \mathbbm{1}\mathbbm{1}^{\top} $. If \(\lambda > \delta\) then \(\spec(\Gamma\Omega)\subset \C_{-}^\circ\).
\end{lemma}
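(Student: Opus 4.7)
The plan is to exploit the common structure of $\Gamma$ and $\Omega$: both are of the form $aI + b\,\mathbbm{1}\mathbbm{1}^\top$, which makes them simultaneously diagonalizable with an explicit eigenbasis. I will then compute the eigenvalues of the product directly and check that each is a strictly negative real number under the hypothesis $\lambda > \delta$.

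First, I would observe that any matrix of the form $M = aI + b\,\mathbbm{1}\mathbbm{1}^\top \in \mathbb{R}^{n\times n}$ (where $n = |\playerSet|$) has $\mathbbm{1}$ as an eigenvector with eigenvalue $a + bn$, and every vector in $\mathbbm{1}^\perp$ as an eigenvector with eigenvalue $a$. Since $\Gamma$ and $\Omega$ share this eigenstructure, they commute and are simultaneously diagonalizable, so $\spec(\Gamma\Omega)$ consists of the pairwise products of the corresponding eigenvalues.

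Next, I would read off the eigenvalues. For $\Gamma$ with $a = 2\lambda - \delta$ and $b = \delta$, the eigenvalues are $2\lambda - \delta$ (multiplicity $n-1$, on $\mathbbm{1}^\perp$) and $2\lambda - \delta + \delta n = 2\lambda + \delta(n-1)$ (on $\mathbbm{1}$). For $\Omega$ with $a = -\tfrac{1}{\delta}$ and $b = \tfrac{1}{\delta(n+1)}$, the eigenvalues are $-\tfrac{1}{\delta}$ (multiplicity $n-1$) and $-\tfrac{1}{\delta} + \tfrac{n}{\delta(n+1)} = -\tfrac{1}{\delta(n+1)}$ (on $\mathbbm{1}$). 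Multiplying along each common eigenspace gives
\[
\spec(\Gamma\Omega) = \left\{-\tfrac{2\lambda-\delta}{\delta}\ \text{(mult. } n-1\text{)},\ -\tfrac{2\lambda+\delta(n-1)}{\delta(n+1)}\right\}.
\]

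Finally, I would verify negativity. Under the hypothesis $\lambda > \delta$, certainly $2\lambda - \delta > \delta > 0$, so the first eigenvalue is strictly negative. For the second, $2\lambda + \delta(n-1) > 0$ since $\lambda, \delta > 0$ and $n \geq 1$, so it too is strictly negative. Hence $\spec(\Gamma\Omega) \subset \mathbb{C}_-^\circ$. There is no real obstacle here beyond carefully identifying the simultaneous eigenbasis; the only subtlety is recognizing that the hypothesis $\lambda > \delta$ is used (via $\lambda > \delta/2$) to ensure the $(n-1)$-fold eigenvalue is negative, which is the binding constraint.
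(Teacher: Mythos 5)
Your proof is correct, and it takes a genuinely different route from the paper's. The paper expands the product directly, writing $-\Gamma\Omega = \frac{1}{\delta(|\playerSet|+1)}\left((|\playerSet|+1)(2\lambda-\delta)I - (2\lambda-2\delta)\mathbbm{1}\mathbbm{1}^\top\right)$, and then invokes Gershgorin's circle theorem to show the diagonal dominates the off-diagonal row sums, placing all eigenvalues of $-\Gamma\Omega$ in the open right half plane. You instead observe that $\Gamma$ and $\Omega$ are simultaneously diagonalizable on $\mathrm{span}(\mathbbm{1})\oplus\mathbbm{1}^\perp$ and read off the exact spectrum $\spec(\Gamma\Omega)=\{-\tfrac{2\lambda-\delta}{\delta} \text{ (mult. } n-1\text{)},\, -\tfrac{2\lambda+\delta(n-1)}{\delta(n+1)}\}$. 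Your computation of the eigenvalues checks out, and both are strictly negative under $\lambda>\delta$ (indeed under the weaker $\lambda>\delta/2$, given $\delta>0$). What each approach buys: the Gershgorin route avoids any eigenstructure observation and would generalize to perturbations that break the rank-one-plus-identity form, but it yields only a sufficient disc bound; your route is sharper, produces the spectrum in closed form, and exposes that the stated hypothesis $\lambda>\delta$ is not tight for this lemma --- the binding constraint is $2\lambda-\delta>0$. That last observation is a genuine (minor) improvement over the paper's analysis, though since the lemma is only invoked under $\lambda>\delta$, it does not change any downstream result.
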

\begin{proof}
Note that 
\begin{align*}
    -\Gamma \Omega &= \frac{1}{\delta (|\playerSet|+1)} ((2\lambda -\delta)I + \delta \mathbbm{1} \mathbbm{1}^{\top})((|\playerSet|+1)I-\mathbbm{1} \mathbbm{1}^{\top}) \\ 
    &= \frac{1}{\delta (|\playerSet|+1)}\lr{(|\playerSet|+1)(2\lambda - \delta)I - (2\lambda -\delta)\mathbbm{1} \mathbbm{1}^{\top} + (|\playerSet|+1)\delta \mathbbm{1} \mathbbm{1}^{\top} - |\playerSet|\delta \mathbbm{1} \mathbbm{1}^{\top} }\\ 
    &=\frac{1}{\delta (|\playerSet|+1)} \lr{ (|\playerSet|+1)(2\lambda - \delta)I - (2\lambda -2\delta)\mathbbm{1} \mathbbm{1}^{\top}}
\end{align*}
From Gershgorin's circle theorem\footnote{Gershgorin's circle theorem \cite{varga2010gervsgorin}, which says that for a square matrix $A \in \mathbb{R}^{n \times n}$, each eigenvalue of $A$ is contained in at least one of the disks:
\begin{align}
    D_i = \{ z\in \mathbb{C} : |z-A_{ii}| \leq \sum_{j \neq i} |A_{ij}| \}
\end{align}
where $A_{ii}$ are the diagonal entries of $A$, and $A_{ij}$ are the off-diagonal entries. In our case, to ensure that $\Gamma \Omega$ has eigenvalues on the open right half plane, we need to ensure that $|A_{ii}|- \sum_{j \neq i} |A_{ij}| > 0$} we know that for the result to hold it is sufficient to ensure 
\begin{align}\label{eq: GershCournot}
    |(|\playerSet|+1)(2 \lambda - \delta) - (2 \lambda - 2 \delta)| > (|\playerSet|-1) |2 \lambda - 2 \delta| 
\end{align}

In fact if \(\lambda>\delta\) then \eqref{eq: GershCournot} holds. 

\end{proof}


Finally, we present the proof of Proposition \ref{prop: CournotConvergence} below:
\begin{proof}[Proof of Proposition \ref{prop: CournotConvergence}]
We show that the set \(\Peq\) is singleton for the setup in Sec \ref{ssec: Quad}\footnote{Note that we cannot directly use Proposition \ref{prop: Alignment} as that required compactness of strategy space.}.  Then we show that the dynamic update \((x_k,p_k)\) corresponding to \eqref{eq: strategyQuad}-\eqref{eq: PriceQuad} converges to social optimality \((x^\dagger,p^\dagger)\). 

Note that any element \(p^\dagger\in \Peq\) should satisfy \(p^\dagger = \exterFin(\xEqFin{}(p^\dagger))=\lr{(2\lambda - \delta)I+\delta \mathbbm{1}\mathbbm{1}^\top}\xEqFin{}(p^\dagger)\). Moreover from Lemma \ref{lem: NashCournot} we know that \(\xEqFin{}(p^\dagger) = \frac{\theta-\nu}{\delta(|\playerSet|+1)} \mathbbm{1} - \frac{1}{\delta}\lr{I-\frac{\mathbbm{1}\mathbbm{1}^\top}{|\playerSet|+1}}p^\dagger\). Succintly writing these two requirements in matrix form gives us:
\begin{align*}
    \underbrace{\begin{bmatrix}
    (2\lambda-\delta) I + \delta \mathbbm{1}\mathbbm{1}^\top & -I \\ 
    \delta I  & I-\frac{1}{|\playerSet|+1}\mathbbm{1}\mathbbm{1}^\top
 \end{bmatrix}}_{B}\begin{bmatrix}
 \xEqFin{}(p^\dagger) \\ p^\dagger
 \end{bmatrix} = \begin{bmatrix}
 0 \\ \frac{\theta-\nu}{|\playerSet|+1}\mathbbm{1}
 \end{bmatrix}
\end{align*}
We claim that \(B\) is an invertible matrix.  Indeed, lower diagonal is an invertible block by \eqref{eq: ShermanBlock} and the Schur complement of B with respect to that block is \(2\lambda I + 2\delta \mathbbm{1}\mathbbm{1}^\top\) which is also invertible. Thus \((\xEqFin{}(p^\dagger),p^\dagger)\) exits and is unique. 



Next,  we use Theorem \ref{thm: ConvergenceFin} to ensure that the dynamic update \((x_k,p_k)\) corresponding to \eqref{eq: CournotStrat}-\eqref{eq: CournotIncentive} converges to the fixed point \((\xEqFin{}(p^\dagger),p^\dagger)\). It is sufficient to show that Assumption \ref{assm: ConvergenceStrategy} and condition \textbf{(C2)} hold. Before checking these conditions we define $\Gamma = (2\lambda - \delta) I +\delta \mathbbm{1} \mathbbm{1}^{\top} $ and $\Omega = - \frac{1}{\delta}I + \frac{1}{\delta (|\mathcal{I}| + 1)}  \mathbbm{1}\mathbbm{1}^{\top} $.

First, we show that Assumption \ref{assm: ConvergenceStrategy} holds. That is, for any fixed incentive $(p_k) = p$, the strategy update satisfies $\lim_{ k \rightarrow \infty} x_k = x^*(p)$. Indeed, due to Assumption \ref{assm: StepSizeAssumption}, the convergence properties of discrete time updates can be obtained by analysing the corresponding continuous time dynamical system stated below:
\begin{align}\label{eq: CournotBestResponse}
    \dot{\strategyFinSep}_i(t) = - \strategyFinSep_i(t) + \left( \frac{\theta - \delta \sum_{j \neq i} \strategyFinSep_{j}(t) -\nu - p_i}{2 \delta} \right)  
\end{align}
We claim that the trajectories of \eqref{eq: CournotBestResponse} satisfy $\lim_{t \rightarrow 0} \strategyFinSep(t) = x^*(p)$. Indeed, the atomic Cournot competition is a potential game with the following potential function for any \(p\):
\begin{align}
    T(\strategyFinSep, p) = - \int_{0}^{\sum_{i=1}^{|\mathcal{I}|} \strategyFinSep_i } (\theta - \delta z) dz + \sum_{i=1}^{|\mathcal{I}|} (\nu + p_i) x_i,
\end{align}
and \eqref{eq: CournotBestResponse} is the corresponding continuous time best response dynamics. Thus \cite[Theorem 2]{swenson2018best} ensures $\lim_{t \rightarrow \infty} \strategyFinSep(t) = x^*(p)$. 

Next, we show that condition $\textbf{(C2)}$ is satisfied, which fulfills the requirements of Theorem \eqref{thm: ConvergenceFin}.
We claim that the function $V(p)= (p-p^{\dagger}) L (p-p^{\dagger})$ satisfies $\textbf{(C2)}$, where $L$ is a symmetric positive definite matrix that satisfies:
\begin{align}\label{eq: LyapCournot}
    (\Gamma \Omega)^{\top} L + L^{\top} (\Gamma \Omega) = -I
\end{align}
Note that the existence of \(L\) follows from the Lyapunov theorem \cite{callier2012linear} as from Lemma \ref{lem: GammaOmega} we know that \(-\Gamma\Omega\) is a Hurwitz matrix. 

Indeed, $V(p^{\dagger}) = 0$ and since $L$ is positive definite, this means $V(p) > 0$ for all $p \neq p^{\dagger}$. Furthermore, we compute:
\begin{align*}
    \nabla V(p)^{\top} (e(x^*(p)) - p) &= 2 (p-p^{\dagger})^{\top} L (e(x^*(p)) - p) \\ & \underset{(a)}{=} 2 (p-p^{\dagger})^{\top} L \lr{\lr{(2 \lambda - \delta) I + \delta \mathbbm{1} \mathbbm{1}^{\top}} x^*(p) - p} \\ 
    & \underset{(b)}{=} 2 (p-p^{\dagger})^{\top} L \lr{\lr{(2 \lambda - \delta)I + \delta \mathbbm{1} \mathbbm{1}^{\top}}(x^*(p) -x^*(p^{\dagger})) + p^{\dagger} - p} \\ 
    & =  2 (p-p^{\dagger})^\top L \Gamma (x^*(p)- x^*(p^{\dagger})) -2(p-p^{\dagger})^\top L (p-p^{\dagger})\\ & \underset{(c)}{=}  2 (p-p^{\dagger})^\top L \Gamma \Omega (p-p^{\dagger}) -2(p-p^{\dagger})^\top L (p-p^{\dagger}) \\
    & \underset{(d)}{=}  (p-p^{\dagger})^\top (L (\Gamma \Omega) + (\Gamma \Omega)^{\top} L) (p-p^{\dagger})-2(p-p^{\dagger})^\top L (p-p^{\dagger})  \\ & =- (p-p^{\dagger})(2L+I) (p-p^{\dagger})
 \end{align*}
 where $(a)$ is by the definition of the externality function $e(x^*(p))$, $(b)$ is by adding and subtracting $p^{\dagger}$, $(c)$ is by the definition of the Nash equilibrium $x^*(p)$, and $(d)$ is by \eqref{eq: LyapCournot}. 
 This completes the proof. 
 \end{proof}
\bibliography{refs}
\bibliographystyle{plain}
\end{document}